\newcommand{\Exp}{\mathbb{E}}
\newcommand{\M}{\mathcal{M}}
\newcommand{\suppress}[1]{}
\newcommand{\Cc}{\ensuremath{\mathcal{C}}}
\def\01{\{0,1\}}
\newcommand{\pmset}[1]{\{-1,1\}^{#1}} %
\newcommand{\A}{\ensuremath{\mathcal{A}}}
\newcommand{\R}{\ensuremath{\mathbb{R}}}
\newcommand{\id}{\ensuremath{\mathbb{I}}}
\newcommand{\Hi}{\ensuremath{\mathcal{H}}}
\newcommand{\RSOA}{\ensuremath{\mathsf{RSOA}}}
\newcommand{\poly}{\ensuremath{\mathsf{poly}}}
\newcommand{\sfat}{\ensuremath{\mathsf{sfat}}}
\newcommand{\DP}{\ensuremath{\mathsf{DP}}}
\newcommand{\PAC}{\ensuremath{\mathsf{PAC}}}
\def\01{\{0,1\}}
\newcommand{\ket}[1]{|#1\rangle}
\newcommand{\bra}[1]{\langle#1|}
\newcommand{\ketbra}[2]{|#1\rangle\langle#2|}
\newcommand{\be}{\begin{equation}}
\newcommand{\ee}{\end{equation}}
\newcommand{\ba}{\begin{array}}
\newcommand{\ea}{\end{array}}
\newcommand{\bea}{\begin{eqnarray}}
\newcommand{\eea}{\end{eqnarray}}
\newcommand{\braketIP}[2]{\langle{#1}|{#2}\rangle}
\DeclareMathOperator{\tr}{Tr}
\newcommand{\ra}{\rangle}
\newcommand{\la}{\langle}
\newcommand{\err}{\mathrm{err}}
\newcommand{\rank}{\mathrm{rank}}
\newcommand{\Tr}{\textsf{Tr}}
\newcommand{\relent}[2]{\mathrm{S}\left(#1\|#2\right)}
\newcommand{\calE}{{\cal E }}
\newcommand{\calC}{{\cal C }}
\newcommand{\FF}{\mathbb{F}}
\newcommand{\EE}{\mathbb{E}}
\newcommand{\ZZ}{\mathbb{Z}}
\newtheorem{question}{Question}
\newtheorem{theorem}{Theorem}
\global\long\def\argmin{\operatornamewithlimits{argmin}}
\newcommand{\MPS}[3]{#1^{(#2)}_{#3}}
\title{A survey on the complexity of learning quantum states}
\date{}
    \author{
    Anurag Anshu\\[2mm]
    Harvard
    University\\
    \small \texttt{anuraganshu@fas.harvard.edu} 
    \and
    Srinivasan Arunachalam\\[2mm]
     IBM Quantum, Almaden Research Center\\
    \small \texttt{Srinivasan.Arunachalam@ibm.com}
    }
\begin{document}
\maketitle
\begin{abstract}
 We survey various recent results that rigorously study the complexity of learning quantum states. 
 These include progress on quantum tomography, learning physical quantum states, alternate learning models to tomography and learning classical functions encoded as quantum states.  We highlight how these results are paving the way for a highly successful theory with a range of exciting open questions. To this end, we distill 25 open questions from these results. 
\end{abstract}

\section{Introduction}

   In the last decade, machine learning has received tremendous attention with the success of deep neural networks (or in more generality deep learning)  in practically relevant tasks such as natural language processing, speech recognition and image processing. Some popular applications of deep learning include AlphaGo  and AlphaZero (to play the games of Go and chess), chatGPT (to mimic a human conversation) and AlphaFold (for solving instances of protein folding)~\cite{jumper2021highly,vaswani2017attention,silver2017mastering}.  Although these machine learning techniques work very well in practice, they are not well understood from a theoretical standpoint. In a seminal work in 1984, Valiant~\cite{DBLP:journals/cacm/Valiant84} introduced the well-known probability approximately correct ($\PAC$) model of learning, which laid the mathematical foundation to understand machine learning from a computational complexity theory perspective. Since then, several mathematical models for machine learning have been proposed, some of which have theoretically justified the successes of practical learning algorithms. The study of machine learning from this complexity theoretic perspective is often referred to as \emph{computational learning~theory}. 
   
In another line of research, a century old quest which includes physicists, mathematicians and - now - computer scientists, is understanding the dividing line between simple and complex quantum states. Some prominent measures of complexity have been formulated in this process - for instance, correlation length and entanglement entropy \cite{EisertP10} from the physics point of view; quantum circuit size and description size~\cite{aaronson2016complexity} from the computer science point of view. A recent revolution in quantum information - inspired by practical implementations of quantum devices and incredible success of machine learning - has brought another measure in picture: \emph{learnability}. In the last decade, there have been several works to understand what classes of quantum states are learnable efficiently and why some classes of states are hard to learn. Here we argue that learnability as a complexity-theoretic metric is remarkably powerful and has been revealing fundamentally new properties of physically and computationally relevant quantum states. This is akin to the aforementioned $\PAC$ learning framework  used to understand machine learning from a  complexity theoretic framework.

A general formalism for learning quantum states is as follows. A learning algorithm (which we often refer to as a learner) receives many independent copies of an unknown quantum state - guaranteed to be within a ``class" of states (known to the learner). Using quantum measurements, the learner extracts information about the unknown state, and then outputs a sufficiently accurate description of the quantum state. We stress on the three defining notions in this general framework: the class of states, the type of measurement done by the learner and the metric for accuracy.
Modifying any one of these parameters can change the quantum learning model in an interesting way and we discuss these models in this survey. The complexity metric associated with these learning models is the quantum \emph{sample complexity}, defined as the number of copies of the unknown state used by the learning algorithm and quantum \emph{time complexity}, defined as the total number of gates used by the algorithm.\footnote{In this survey, we will also discuss classical sample and time complexity and it's definition will be clear when we discuss these complexities.}


\subsection{Organization of this survey} 
Our survey discusses these learning models that come with rigorous guarantees on the sample and time complexity, as detailed below. 
\begin{enumerate}
    \item \emph{Learning arbitrary quantum states.} Here the goal is to learn an arbitrary  $n$-qubit quantum state $\rho$, given copies of $\rho$, up to  small trace distance. Given the generality of this task, the sample complexity  of this task is known to be exponential in $n$. We discuss this in Section~\ref{sec:tomography}.
    \item \emph{Learning physical quantum states.} A natural followup question is, can we learn \emph{interesting} subclasses of quantum states efficiently? In this direction we look at stabilizer states, states from the Clifford hierarchy, Gibbs states at different temperature regimes and matrix product states. We discuss this in Section~\ref{sec:physicalstates}. 
    
    \item \emph{Learning states in alternate models.} Suppose the goal of the learner was to still learn an unknown quantum state, can we weaken the requirement for the learner and still learn the unknown~$\rho$?  To this end, there are models of learning called $\PAC$ learning, online learning, shadow tomography and several equivalences between them. We discuss this in Section~\ref{sec:alternatemodels}.
    \item \emph{Learning classical functions encoded as states.} Suppose the unknown state $\rho$ encodes a classical function, what is the complexity of learning? Here we discuss known results on quantum $\PAC$ learning, agnostic learning, statistical query learning and kernel methods which encode classical data into quantum states, and exhibit the strengths and weaknesses of  quantum examples for learning classical functions.  We discuss this in Section~\ref{sec:classicalfunctionlearning}
\end{enumerate}
Finally we conclude in Section~\ref{sec:prospective} with some perspective on other works related to sample and time complexity of learning quantum states. Throughout this  survey we have put together several open questions that would improve our understanding on the complexity of quantum states from the perspective of learning theory.


\section{Tomography}
\label{sec:tomography}
Quantum state tomography ($\textsf{QST}$) is the following task: given many independent copies of an unknown $n$-qubit quantum state $\rho$ living in $\mathbb{C}^{d}$ where $d=2^n$,\footnote{An $n$-qubit quantum state is a positive semi-definite operator on $\mathbb{C}^{2^n}$ such that $\Tr[\rho] = 1$.} output a $\hat{\rho}$ such that $\|\hat{\rho}-\rho\|_{tr}\leq \delta$ (where $\|\cdot \|_{tr}$ is the trace norm). Understanding the sample complexity of $\textsf{QST}$ has been a fundamental question in quantum information theory with applications in tasks such as verifying entanglement \cite{Kokail2021}, understanding correlations in quantum states \cite{cramer2010efficient}, and is useful for  understanding, calibrating and controlling noise in   quantum devices. A simple protocol for $\textsf{QST}$ uses  $T=O(d^6)$ copies: simply let $P_1,\ldots,P_{d^2}$ be all the $d$-dimensional Pauli matrices, use $O(d^2/\delta)$ copies of $\rho$ to estimate $\Tr(P_i\rho)$ up to error $\delta/d^2$. Using a technique of linear inversion, this is sufficient to produce $\hat{\rho}$ that satisfies  $\|\hat{\rho}-\rho\|_{tr}\leq \delta$. The overall sample complexity is $d^2\cdot O(d^4/\delta)=O(d^6/\delta)$. The dependence on the error $\delta$ is intuitive as more accurate description requires more measurements. Subsequently~\cite{flammia2012quantum} used techniques from compressive sensing to improve the complexity to $O(d^4/\delta^2)$ and after that Kueng et al.~\cite{kueng2017low} used more sophisticated techniques to improve the sample complexity to $O(d^3/\delta^2)$ and it was open for a while what was the right sample complexity of tomography. Two breakthrough works by Haah et al.~\cite{haah2017sample} and O'Donnell and Wright~\cite{o2016efficient} finally obtained optimal bounds for  the sample complexity of~$\textsf{QST}$. 

\begin{theorem}
\label{thm:tomsamcom}
The sample complexity of $\textsf{QST}$ up to trace distance $\delta$ is $O(d^2/\delta^2)$. Promised that the state is rank $r$, the sample complexity of $\textsf{QST}$ up to infidelity  $\varepsilon$ is $\tilde{\Theta}(dr/\varepsilon)$.\footnote{Infidelity between quantum states $\rho,\sigma$ is defined as $1-\|\sqrt{\rho}\sqrt{\sigma}\|_{tr}$.}
\end{theorem}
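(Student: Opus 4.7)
The plan is to follow the Schur--Weyl framework of Haah et al.~\cite{haah2017sample} and O'Donnell--Wright~\cite{o2016efficient}. Under the commuting actions of $S_n$ and $U(d)$ on $(\CC^d)^{\otimes n}$, this tensor power decomposes as $\bigoplus_\lambda V_\lambda \otimes W_\lambda$, indexed by partitions $\lambda \vdash n$ with at most $d$ rows, where $V_\lambda$ is an irrep of $U(d)$ and $W_\lambda$ of $S_n$. Given $n$ copies of $\rho$, I would first perform \emph{weak Schur sampling}, the projective measurement onto each $V_\lambda \otimes W_\lambda$ block. Classical Keyl--Werner analysis shows that the normalized diagram $\bar\lambda := \lambda/n$ concentrates around the spectrum of $\rho$. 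Conditioned on outcome $\lambda$, the post-measurement state factorizes (up to normalization) as an operator $U_\lambda(\rho)$ on the $V_\lambda$ register tensored with the maximally mixed state on $W_\lambda$. I would then apply a second, covariant ``pretty-good'' measurement (equivalently, Keyl's POVM) on the $V_\lambda$ register to read off an estimated eigenbasis, and declare the Keyl estimator $\widehat\rho$ to be the state with eigenvalues $\bar\lambda$ (appropriately smoothed) and eigenvectors drawn from this second measurement.

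For the trace-distance bound, I would decompose $\EE\,\|\widehat\rho - \rho\|_{tr}$ into an eigenvalue error, controlled by $\ell_1$ concentration of $\bar\lambda$ to $\mathrm{spec}(\rho)$, and an eigenbasis error, controlled by matrix Chernoff and character estimates on $V_\lambda$; each contributes $O(d^2/\delta^2)$ using the sharp Schur--Weyl tail bounds of Haah et al. The matching lower bound $\Omega(d^2/\delta^2)$ follows from a Holevo-type packing argument: construct exponentially many full-rank states (for instance random Pauli perturbations of $\id/d$) that are pairwise $\delta$-separated in trace distance, and use a quantum Fano inequality to conclude that distinguishing them demands $\Omega(d^2/\delta^2)$ copies. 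For the rank-$r$, infidelity-$\varepsilon$ bound, the key observation is that only the first $r$ rows of $\lambda$ are macroscopic with high probability, so the Keyl estimator need only resolve an $r$-dimensional subspace of $\CC^d$; this replaces one of the $d$ factors by $r$, leaving $dr$ parameters to estimate. Combined with the fact that infidelity is quadratically less stringent than trace distance for near-pure states, a careful accounting yields $\tilde O(dr/\varepsilon)$, and a matching lower bound comes from packing rank-$r$ projectors on the Grassmannian $\mathrm{Gr}(r,d)$.

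The main obstacle I anticipate is the tight analysis of the Keyl estimator. Naive union bounds over the irrep label $\lambda$ or over eigenbasis outcomes lose extra factors of $d$ or $\log n$, so achieving the sharp $O(d^2/\delta^2)$ scaling requires the refined Schur--Weyl tail estimates and the delicate character/dimension identities exploited in the two works. Transitioning from trace distance to infidelity and from full rank to rank $r$ is where most of the subtlety lives: the Keyl estimator must be modified (e.g.\ by truncating small eigenvalues or projecting onto the typical subspace) to avoid spurious rank-$(r{+}1)$ contributions, and the analysis of the second measurement must leverage representation-theoretic orthogonality rather than coarse concentration inequalities.
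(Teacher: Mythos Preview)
Your outline is essentially correct and faithfully sketches the full Schur--Weyl/Keyl-estimator machinery of \cite{haah2017sample,o2016efficient}, including the right ingredients for both the upper bounds (weak Schur sampling, covariant measurement on the $V_\lambda$ register, spectrum concentration) and the lower bounds (packing plus Fano/Holevo). At the level of a proof plan there is no genuine gap.

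That said, the paper does something considerably more modest: it proves only the rank-$1$ special case, and does so by a direct, elementary argument on the symmetric subspace $\Pi_{\mathrm{sym}}^{d,k}$. Given $\ket{\psi}^{\otimes k}$, one applies the continuous pretty-good measurement $\{\ketbra{\phi}{\phi}^{\otimes k}\}_\phi$ (normalized by $\binom{d+k-1}{k}$), and bounds the probability of outputting a $\ket{\phi}$ with $|\braketIP{\phi}{\psi}|^2\le 1-\varepsilon$ by $\binom{d+k-1}{k}(1-\varepsilon)^k$, which is small once $k=\tilde{O}(d/\varepsilon)$. This is precisely your framework specialized to the single irrep $\lambda=(k)$, where the Schur--Weyl decomposition collapses to the symmetric subspace and no spectrum estimation is needed; but the paper's presentation avoids any mention of Young diagrams, Keyl--Werner concentration, or character identities. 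For the general rank-$r$ and trace-distance statements the paper simply defers to \cite{haah2017sample,o2016efficient,wright2016learn}. So your proposal is both more general and more technical than what the paper actually carries out; if your goal is to match the paper, you should instead give the short symmetric-subspace calculation for pure states and cite the rest.
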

We now give a proof overview of a special case of this theorem - when the quantum state is pure (rank $r=1$). It makes use of the symmetric subspace and achieves a sample complexity of $\tilde{O}(d/\varepsilon)$. This is tight in $d$, as shown in~\cite{haah2017sample}.

\begin{proof}[Special case of Theorem \ref{thm:tomsamcom}]
Given an unknown $d$ dimensional pure state $\ket{\psi}^{\otimes k}$, with $k$ yet undetermined, note that the state lives inside the symmetric subspace $\Pi_{sym}^{d,k}$. To determine the state, one can perform  the so-called \emph{pretty-good measurement}~\cite{eldar2001quantum}, which has (continuous) POVM elements $\{\ketbra{\phi}{\phi}^{\otimes k}\}_{\ket{\phi}\in \mathbb{C}^d}$. Note that this measurement has infinitely many outcomes, which is ill defined, but we can address this by appropriate discretization. As a consequence, the measurement to be performed is 
$$
X \rightarrow {d+k-1\choose k} \int_{\phi}d\phi\hspace{1mm}\ketbra{\phi}{\phi}^{\otimes k}X\ketbra{\phi}{\phi}^{\otimes k}\otimes \ketbra{\text{description of }\phi}{\text{description of }\phi},
$$
which is a valid POVM whenever $X$ is in the symmetric subspace. The factor ${d+k-1\choose d}$ is the dimension of the symmetric subspace and ensures that the measurement is trace-preserving. Given $\ketbra{\psi}{\psi}^{\otimes k}$ as input, observe that a state $\ketbra{\phi}{\phi}$ is output with probability
$$
{d+k-1\choose k}\bra{\phi}^{\otimes k}\ketbra{\psi}{\psi}^{\otimes k}\ket{\phi}^{\otimes k}={d+k-1\choose k}|\braketIP{\phi}{\psi}|^{2k}.
$$
Thus, the probability that $|\braketIP{\phi}{\psi}|\leq 1-\varepsilon$ is at most
\begin{eqnarray*}
&&{d+n-1\choose n}\int_{\phi: |\braketIP{\phi}{\psi}|\leq 1-\varepsilon}d\phi|\braketIP{\phi}{\psi}|^{2k}\leq {d+k-1\choose k}\cdot (1-\varepsilon)^{2k}\leq \left(e\cdot\frac{k+d-1}{d}\right)^de^{-2k\varepsilon}.
\end{eqnarray*}
Choosing $k=\frac{10d}{\varepsilon}\log\frac{1}{\varepsilon}$, we can guarantee that RHS is small.
\end{proof}
In order to go from the special case to the theorem above, ~\cite{haah2017sample,o2016efficient} consider  a generalization of this argument and proceed by looking at subspaces that are invariant under permutations of registers and local unitary action. We refer the interested reader to~\cite{wright2016learn,oprimer} for a detailed exposition of the general proof. Very recently, the work of Flammia and O'Donnell~\cite{flammiaodonnell} considered the sample complexity of tomography under various distance metrics. A drawback of these tomography algorithms is that the time complexity of the procedure scales exponentially in $d$ (i.e., doubly-exponentially in $n$).  A natural question that was open from their work was, is there a \emph{time-efficient} procedure for tomography? In particular, is it possible to solve $\textsf{QST}$ using only single-copy measurements? There were a few works in this direction recently~\cite{yuen2022improved,lowe2022lower} and very recently Chen et al.~\cite{chen2022tight} answered this question with a surprisingly short proof.
\begin{theorem}
\label{thm:tomographylowerbound}
The sample complexity of $\textsf{QST}$ using single copy measurements is $\Theta(d^3/\delta^2)$.
\end{theorem}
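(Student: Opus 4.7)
The upper bound $O(d^3/\delta^2)$ with single-copy measurements already follows from the compressive-sensing / random-basis protocols surveyed above: e.g., Kueng et al.\ measure each fresh copy of $\rho$ in a uniformly random Clifford basis and reconstruct $\rho$ by a (thresholded) least-squares estimator, and the analysis is single-copy by construction. So the substance of Theorem~\ref{thm:tomographylowerbound} is the matching lower bound $\Omega(d^3/\delta^2)$, and my plan targets that direction.

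I would start by constructing a hard family of states
\[
\rho_U \;=\; \frac{1}{d}\mathbb{I} \;+\; \frac{\delta}{d}\, U D U^\dagger,
\]
where $D$ is a fixed traceless diagonal matrix whose eigenvalues are $\pm 1$ in equal numbers and $U$ is Haar-random on $U(d)$. Then $\rho_U\succeq 0$ for $\delta\le 1$ and $\|\rho_U-\mathbb{I}/d\|_{tr}=\delta$, and an $\Omega(\delta)$-separated packing of the orbit $\{UDU^\dagger\}$ has cardinality $\exp(\Omega(d^2))$, so any $\delta/4$-accurate tomography algorithm must effectively identify the orbit element. I would then model any adaptive single-copy algorithm as a depth-$T$ decision tree: at each node $v$ the learner selects a rank-$1$ POVM $\{w_{v,\phi}\ketbra{\phi}{\phi}\}_\phi$, and the transcript $(\phi_1,\ldots,\phi_T)$ has law $P_U$, to be compared with its law $P_{mm}$ under the maximally mixed state (which does not depend on $U$).

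The per-step calculation is where the $d^3$ enters. For any node and outcome $\phi$, the single-shot likelihood ratio is $p_U(\phi)/p_{mm}(\phi)=1+\delta\,\braket{\phi}{UDU^\dagger}{\phi}$, and a direct Haar second-moment computation gives
\[
\mathbb{E}_U\!\left[\braket{\phi}{UDU^\dagger}{\phi}^2\right] \;=\; \frac{\Tr(D^2)+\Tr(D)^2}{d(d+1)} \;\lesssim\; \frac{1}{d},
\]
since $\Tr(D)=0$ and $\Tr(D^2)=d$. Thus, averaged over the prior on $U$, each single outcome contributes only $O(\delta^2/d)$ to the $\chi^2$ divergence between the transcript laws. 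Accumulating over the $T$ rounds, the total information about $U$ is roughly $T\delta^2/d$, and balancing this against the $\Omega(d^2)$ bits of entropy in the packing yields $T=\Omega(d^3/\delta^2)$.

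The hardest part is turning the per-step second-moment bound into a tight global lower bound: a naive Fano argument on an explicit $\delta$-packing would cost an extra $\log(1/\delta)$ factor, and the single-step $\chi^2$ bound does not tensorize cleanly under adaptivity. The short proof of~\cite{chen2022tight} bypasses both issues by tracking a single second-moment functional of the Haar posterior over $U$ as one descends the tree; the left/right invariance of the Haar measure ensures that the per-step increment depends on the chosen POVM only through $\Tr(D^2)/d^2$, sidestepping any covering loss and yielding the tight $\Omega(d^3/\delta^2)$ bound.
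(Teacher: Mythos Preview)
Your proposal is essentially correct and tracks the paper's sketch closely: the upper bound is attributed to Kueng et al., and the lower bound is argued in the learning-tree framework with a hard family of near-maximally-mixed states and a per-copy information leakage of order $\delta^2/d$. The one substantive difference is the choice of hard ensemble. The paper, following~\cite{chen2022tight}, uses states built from \emph{Gaussian ensemble matrices} (roughly $\rho=(\id+\varepsilon G)/d$ with $G$ drawn from GUE), whereas you use the Haar orbit $\rho_U=(\id+\delta\,UDU^\dagger)/d$ of a fixed traceless $\pm 1$ matrix. Both constructions are unitarily invariant and give the same second-moment calculation you wrote down, so the arithmetic lines up; the Gaussian choice buys slightly cleaner independence when controlling the potential along the tree, while your fixed-spectrum choice makes the trace-distance normalization exact rather than only in expectation. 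The paper also phrases the key technical step differently: rather than ``tracking a second-moment functional of the posterior,'' it describes~\cite{chen2022tight} as proving that the learner's \emph{output state is anti-concentrated} around the true target whenever the tree has depth $o(d^3/\delta^2)$. This is the estimation-theoretic conclusion drawn from exactly the martingale-type potential you allude to, so the mechanism you describe is right even if the paper's language emphasizes the endpoint rather than the tool.
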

The upper bound comes from the result of Kueng et al~\cite{kueng2017low} and  Chen et al.~\cite{chen2022tight} proved the lower bound of $\Omega(d^3/\delta^2)$ for $\textsf{QST}$ with single-copy (and, adaptive) measurements. We now sketch their lower bound. A technical challenge they had to overcome was the following: prior works that established sample lower bounds, proved this in the context of \emph{property testing}, where they  proved the hardness between distinguishing two hard distributions over states whose statistics (on separable measurements) were far apart. However, for tomography there are not too many techniques that we know to prove lower bounds against separable measurements. In this paper they use the so-called ``learning-tree framework" (which was first used in the prior work of Chen et al.~\cite{DBLP:conf/focs/ChenCH021} and inspired by classical decision trees which is used to analyze query complexity \cite{buhrman2002complexity}) to prove their lower bounds.  Here there is a tree where each node corresponds to a measurement operator applied onto a copy of the unknown state and the leaves are given by  classical bit string, corresponding to measurement labels. Based on the classical output in the leaves, the algorithm outputs an hypothesis state $\sigma$. The depth of the tree is the sample complexity of the learning algorithm. 
Chen et al.~\cite{chen2022tight} construct  a hard distribution of quantum states based on Gaussian ensemble matrices and   their main technical contribution is to show the following: if a separable tomography protocol is run on this hard instance, the leaves of the decision tree above (i.e., the output quantum state $\sigma$) is anti-concentrated around the unknown target quantum state if the depth of the tree is $o(d^3)$. Proving this anti-concentration is non-trivial, however the proof is fairly~short and we refer to their work for more.  

We conclude this section by discussing a simpler problem than $\textsf{QST}$:  quantum \emph{spectrum estimation}. Here the goal is to learn the spectrum of an unknown quantum state $\rho$, given copies of~$\rho$. It was showed~\cite{o2016efficient} that $O(d^2/\varepsilon^2)$ copies of~$\rho$ suffices to estimate the spectrum of~$\rho$ up to~$\ell_1$ distance $\varepsilon$ and they also showed a lower bound of $\Omega(d/\varepsilon^2)$.\footnote{They also showed that a \emph{class of algorithms} based on Schur sampling require an $\Omega(d^2/\varepsilon^2)$ sample complexity.} Spectrum learning has been an important subroutine in several property testing algorithms~\cite{o2015quantum,o2016efficient,wright2016learn,o2017efficient}. One question that remains open is the following:
\begin{question}
What is the tight sample complexity of quantum spectrum estimation?
\end{question}

\section{Learning physical quantum states}
\label{sec:physicalstates}
In the previous section we saw that fully learning arbitrary quantum states could require exponentially many copies of the unknown state. A natural question is, are there \emph{physical} subclasses of quantum states  which can be learned using polynomially many copies (and even polynomial time)? In this section, we discuss a few classes of physical states that can be learned using polynomial sample or time complexity. 

\subsection{Stabilizer states}
A natural candidate class that was considered for efficient tomography were  states that are known to be classically simulable. To this end, one of the first classes of states that were known to be learnable in polynomial time are stabilizer states. These are $n$-qubit states produced by the action of $n$-qubit Clifford circuits acting on $\ket{0^n}$. Aaronson and Gottesman~\cite{aaronson2004improved,aaronsongottesmantalk} considered this question and showed the following theorem. 
\begin{theorem}
\label{thm:stab}
The sample complexity of exactly learning $n$-qubit stabilizer states is $O(n)$ and the time complexity is $O(n^3)$.
\end{theorem}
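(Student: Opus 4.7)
The plan is to exploit the fact that an $n$-qubit stabilizer state $\ket{\psi}$ is uniquely determined (up to a global phase) by its stabilizer group $S$, which under the symplectic representation $P_n / \{\pm 1, \pm i\} \to \mathbb{F}_2^{2n}$ corresponds to a Lagrangian subspace $L \subset \mathbb{F}_2^{2n}$ of dimension $n$, together with an assignment of signs $\pm 1$ to each of $n$ generators. So it suffices to learn (i) the subspace $L$, and (ii) the $n$ signs. Each of these will use $O(n)$ copies, for an overall sample complexity of $O(n)$.

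For step (i), I would use a Bell-sampling protocol: take two copies of $\ket{\psi}$, apply an entangling measurement in the Bell basis (after applying a complex-conjugation-absorbing local unitary on one copy so that the output distribution is governed by $|\langle \psi | P_a | \psi \rangle|^2$ rather than $|\langle \psi^* | P_a | \psi\rangle|^2$). For a stabilizer state this probability is $1/2^n$ when $a \in L$ and $0$ otherwise, so each two-copy measurement returns a uniformly random element of $L$. A standard coupon-collector argument for random vectors in $\mathbb{F}_2^n$ shows that $O(n)$ independent uniform samples span $L$ except with exponentially small failure probability (the probability that $k$ uniform samples fail to span an $n$-dimensional $\mathbb{F}_2$-subspace is at most $2^{n-k}$). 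Running Gaussian elimination over $\mathbb{F}_2$ on the resulting $O(n) \times 2n$ matrix extracts $n$ independent generators $g_1,\ldots,g_n$ of $L$ in time $O(n^3)$, which dominates the runtime.

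For step (ii), once $g_1,\ldots,g_n$ are in hand as signed Pauli operators (with signs yet to be determined), I would measure each $g_i$ on a fresh copy of $\ket{\psi}$. Since $g_i \ket{\psi} = \pm \ket{\psi}$, the outcome is deterministically $+1$ or $-1$ and reveals the correct sign in one shot; this contributes $n$ additional copies and $O(n^2)$ additional time (since each Pauli measurement on a known stabilizer state takes $O(n)$ classical post-processing via the Gottesman--Knill tableau). Combining steps, the total sample complexity is $O(n)$ and the total time complexity is $O(n^3)$.

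The main obstacle I expect is the phase-accounting for Bell sampling: showing that with the right modification the two-copy measurement really does give the uniform distribution on $L$ (as opposed to a distribution supported on the symplectic dual of $L$ or on $L$ translated by conjugation-induced signs), and correctly tracking how the signs produced by Bell sampling interact with the sign-determination step in (ii). The coupon-collector step and the Gaussian-elimination step are routine once the measurement statistics are established.
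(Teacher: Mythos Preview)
Your proposal is correct and uses the same core primitive as the paper---Bell sampling on two copies---but you package the algorithm differently. The paper (following Montanaro) works in the affine-subspace-plus-quadratic-form parameterization $\ket{\psi}=\frac{1}{\sqrt{|A|}}\sum_{x\in A}i^{\ell(x)}(-1)^{q(x)}\ket{x}$: it first learns the support $A$ by computational-basis sampling, reduces to the full cube, and then uses CNOTs between two copies to extract the rows $(B+B^\top)y$ of the quadratic form, finishing off the diagonal of $B$ separately. You instead work directly in the symplectic picture, using Bell sampling to draw uniform elements of the Lagrangian subspace $L\subset\mathbb{F}_2^{2n}$ and then fixing the $n$ signs with single-copy Pauli measurements. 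Both routes yield $O(n)$ copies and $O(n^3)$ time via Gaussian elimination; your version has the advantage of treating the general case uniformly rather than reducing to a special case, and of making the sign-recovery step completely transparent, at the cost of the conjugation bookkeeping you flagged (which Montanaro's original paper indeed handles). Your coupon-collector bound $2^{n-k}$ is in fact slightly tighter than the paper's stated $O(n\log n)$ repetitions.
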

In their paper,~\cite{aaronson2004improved} also showed that with single-copy measurements $O(n^2)$ copies of a stabilizer state $\ket{\psi}$ suffice to learn $\ket{\psi}$.  Subsequently, Montanaro~\cite{montanaro2017learning} gave a fairly simple procedure to learn stabilizer states using $O(n)$ copies that only uses entangled measurements over $2$ copies (prior to his work, Low~\cite{low2009learning} showed how to learn stabilizer states when one is allowed to make queries to the Clifford circuit preparing the unknown stabilizer state). We now discuss Montanaro's protocol: it is well-known~\cite{dehaene2003clifford,nest2008classical} that every $n$-qubit stabilizer state can be written as $\ket{\psi}=\frac{1}{\sqrt{|A|}}\sum_{x\in A}i^{\ell(x)}(-1)^{q(x)}\ket{x}$, where $A\subseteq \01^n$ is a subspace and $\ell$ (resp.~$q$) is a linear (resp.~quadratic) polynomial over $\FF_2$ in the variables $x_1,\ldots,x_n$. 

\begin{proof}[Special case of Theorem \ref{thm:stab}]
We consider the case when $\ell(x)=1$ for all $x$. Without loss of generality we can assume that $A=\01^n$ as well: a learning algorithm can measure $\tilde{O}(n)$ copies of $\ket{\psi}$ in the computational basis, learn the basis for $A$ and apply an  invertible transformation  to convert $\ket{\psi}$ to $\sum_{x\in \01^{k}\times 0^{n-k}}(-1)^{q(x)}\ket{x}$ where $\rank(A)=k$  and now apply a learning procedure on states of the form $\ket{\phi}=\frac{1}{\sqrt{2^k}}\sum_{x\in \01^k} (-1)^{q(x)}\ket{x}$. With this assumption, the learning algorithm uses the so-called Bell-sampling procedure: given two copies of $\ket{\phi_q}=\frac{1}{\sqrt{2^n}}\sum_{x} (-1)^{q(x)}\ket{x}$ where $q(x)=x^\top B x$ (where $B\in \FF_2^{n\times n}$), perform $n$ CNOTs between the first copy and  second copy, and measure the second copy. One obtains a uniformly random $y\in \FF_2^n$ and the~state 
$$
\frac{1}{\sqrt{2^n}}\sum_x (-1)^{f(x)+f(x+y)}\ket{x}=\frac{(-1)^{y^\top Ay}}{\sqrt{2^n}}\sum_x (-1)^{x^\top(B+B^\top)\cdot y}\ket{x}.
$$
The learning algorithm then applies the $n$-qubit Hadamard transform and measures to obtain bit string $(B+B^\top)\cdot y$. Repeating this process $O(n\log n)$ many times, one can learn $n$ linearly independent constraints about $B$. Using Gaussian elimination, allows one to learn the off-diagonal elements of $B$. {To learn the diagonal elements of $B$, a learner applies the operation $\ket{x}\rightarrow (-1)^{x_{ij}}\ket{x}$ if $B_{ij}=1$ for $i\neq j$. Repeating this for all $i\neq j$, the resulting quantum state is $\sum_x (-1)^{\sum_i x_i B_{ii}}\ket{x}$. Again applying the $n$-qubit Hadamard transform, the learner learns the diagonal elements of $B$.}  
\end{proof}

Given that stabilizer states are learnable using $O(n)$ copies, a followup question which hasn't received much attention is the following.

\begin{question}
    The stabilizer rank of $\ket{\psi}$ is the minimum $k$ for which $\ket{\psi}=\sum_i \alpha_i \ket{\phi_i}$ where $\ket{\phi_i}$ is an $n$-qubit stabilizer state. Can we learn stabilizer rank-$n$ states in polynomial time? 
\end{question}

Inspired by a result of Raz~\cite{DBLP:journals/jacm/Raz19} who proved time-space tradeoffs for parity learning, we also pose the following question.
\begin{question}
    The standard Bell-sampling approach for learning stabilizer states uses $O(n)$ copies of the stabilizer state and $O(n^2)$ classical space. If we have $o(n^2)$ classical space, what is the sample complexity of learning stabilizer states? Similarly, can we prove sample-space tradeoffs when the algorithm is given quantum space?\footnote{Recently, Liu et al.~\cite{liumemory} showed that an algorithm for learning parities needs either $\Omega(n^2)$ classical space, $\Omega(n)$ quantum space or $\Omega(2^n)$ labelled examples.}
\end{question}
    
\subsection{Learning circuits with non-Clifford gates}
We saw how to learn the output states of Clifford circuits; a natural question is, if the circuit consists of a few \emph{non-Clifford} $T$ gates, can we still learn the output state?  It is known that that Clifford+$T$ circuits are universal for quantum computation and they have received much attention in fault-tolerance, circuit compilation and circuit simulation~\cite{BK98, FMM+12,KMM13, Sel15, RS16,BSS16, BG16,bravyi2019simulation}.  An arbitrary quantum circuit can be decomposed as a alternating sequence of Clifford stages and $T$ stages (by Clifford stage, we mean a Clifford circuit and by $T$-stage we mean either a $T$ gate or identity is applied to each qubit). The number of $T$ stages is  the \textit{$T$-depth} of the circuit. The learning task we consider is: Suppose $U$ is an $n$-qubit quantum circuit belonging to the class of $T$ depth-one circuits, can one learn $U$? In particular, if we are allowed to apply $U$ to specified prepared states and measure under a class of POVMs, how many measurements are required for learning $U$? In~\cite{lai2022learning}, they proved the following theorem.
\begin{theorem}
   Let $U$ be an $n$-qubit  $T$-depth one quantum circuit  comprising of $O(\log n)$ many $T$ gates. There exists a procedure that makes $\textsf{poly}(n)$ queries to $U$ and  outputs a circuit $\tilde{U}$  that is equivalent to $U$  when the input states are restricted to the computational~basis.
\end{theorem}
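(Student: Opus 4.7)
The plan is to exploit the structural decomposition of a $T$-depth-one unitary: any such $U$ admits the form $U = C_1 \cdot T_S \cdot C_0$, where $C_0, C_1$ are Clifford circuits and $S \subseteq [n]$ with $|S| = O(\log n)$ is the set of qubits on which a $T$ gate acts. Expanding $T_S = \sum_{z \in \{0,1\}^{|S|}} \omega^{|z|} \Pi_z$, where $\omega = e^{i\pi/4}$ and $\Pi_z$ projects the $S$-qubits onto $\ket{z}$, we obtain $U\ket{x} = \sum_z \omega^{|z|} C_1 \Pi_z C_0 \ket{x}$: a linear combination of at most $2^{|S|} = \poly(n)$ stabilizer states, since each $\Pi_z C_0 \ket{x}$ is stabilizer (projection of a stabilizer state onto a computational-basis subspace) and $C_1$ preserves this class. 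So every computational-basis query to $U$ produces an output of stabilizer rank $\poly(n)$, and the task reduces to learning the three components $S$, $C_0$, and $C_1$.

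I would proceed in three phases. First I would identify $S$: for each qubit $i$, compare two-copy Bell measurements in the style of Montanaro (Theorem~\ref{thm:stab}) on $U\ket{x}$ averaged over random $x$ against the statistics predicted by a purely Clifford circuit; discrepancies localize to the qubits of $S$, and a $\poly(n)$-sample statistical test suffices because $|S|$ is only logarithmic. Second, with $S$ identified, I would learn $C_0$'s action on computational-basis inputs by querying $U$ on many $\ket{x}$ and post-processing the samples to identify the stabilizer supports of the summands $C_1 \Pi_z C_0 \ket{x}$; enumerating over the $\poly(n)$ values of $z$ and cross-checking consistency pins down the Clifford data (affine support, linear and quadratic phase polynomials over $\mathbb{F}_2$) describing $C_0$. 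Third, once $C_0$ and $S$ are fixed, the residual unitary $C_1$ is determined by its action on a few stabilizer states $T_S C_0 \ket{x_j}$ for cleverly chosen $\ket{x_j}$ whose $C_0$-image lies in the $+1$ eigenspace of $T_S$; for such inputs $U\ket{x_j}$ is a bona fide stabilizer state and Theorem~\ref{thm:stab} directly yields $C_1$.

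The main obstacle I expect is the bootstrapping gap in the second phase: standard Bell-sampling works for true stabilizer states, whereas $U\ket{x}$ here is only a $\poly(n)$-term combination of them. The key technical lemma needed is that running Bell-sampling on a state of stabilizer rank $r$ still produces useful linear constraints on the underlying stabilizer groups with sample-complexity overhead polynomial in $r$; controlling this overhead while disentangling the $2^{|S|}$ branches indexed by $z$ is where the subtlety lies. A secondary simplification that I would lean on is the promise that $\tilde U$ need only agree with $U$ on computational-basis inputs: this tolerates a large ambiguity in the learned pair $(C_0, C_1)$---any two choices differing by a diagonal unitary absorbed into $C_0$ on the left or $C_1$ on the right are indistinguishable on this promise---which reduces the effective dimension of the hypothesis space and should make the overall query complexity fit into $\poly(n)$.
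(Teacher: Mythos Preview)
The paper itself does not prove this theorem; it simply states the result and refers to Lai and Cheng~\cite{lai2022learning} for the details. So there is no in-paper argument to compare against. That said, your proposal has genuine gaps that would need to be closed before it could stand as a proof.

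The most concrete problem is in Phase~1. The set $S$ is not an invariant of $U$: different factorizations $U=C_1'T_{S'}C_0'$ can use different $S'$ (conjugating the $T$-layer by Clifford permutations or CNOTs moves the $T$ gates around). More seriously, even fixing a decomposition, the trailing Clifford $C_1$ delocalizes the effect of the $T$ gates across all $n$ qubits, so there is no reason Bell-sampling statistics on $U\ket{x}$ should exhibit ``discrepancies localized to the qubits of $S$.'' Your Phase~3 has a related issue: asking that $C_0\ket{x_j}$ lie in the $+1$ eigenspace of $T_S$ forces the reduced state of $C_0\ket{x_j}$ on the $S$-register to be exactly $\ket{0}^{\otimes|S|}$; for a generic Clifford $C_0$ there need not exist any computational-basis $x_j$ with this property, so you may never get access to a genuine stabilizer output on which to run Theorem~\ref{thm:stab}.

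Finally, the ``key technical lemma'' you flag in Phase~2 is doing all the work and is not merely unproven but implausible as stated. Bell sampling on two copies of a stabilizer-rank-$r$ state $\ket{\psi}=\sum_i\alpha_i\ket{\phi_i}$ samples from a distribution involving $r^2$ cross terms $\alpha_i\overline{\alpha_j}\braketIP{\phi_i}{\phi_j}$-type interferences; there is no known mechanism by which these samples yield clean $\FF_2$-linear constraints on the individual stabilizer groups with only $\poly(r)$ overhead. The actual Lai--Cheng argument avoids this entirely: rather than trying to dissect stabilizer-rank superpositions, it probes how $U$ conjugates Pauli operators (using that $TXT^\dagger=(X+Y)/\sqrt{2}$ introduces a controlled branching), which is a fundamentally different access pattern than analyzing $U\ket{x}$ as a state.
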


We omit the proof of this theorem and refer the reader to~\cite{lai2022learning} for more details.  Recently, there was a  hardness for learning the output distributions of  Clifford circuits with a \emph{single} $T$ gate~\cite{hinsche2022single},\footnote{The hardness result~\cite{hinsche2022single} is in a weaker \emph{statistical query} model which we discuss in Section~\ref{sec:sqlearning}, whereas the positive result~\cite{lai2022learning} considers the standard tomography model wherein the learner is given copies of the~state.} it is surprising that $T$-depth~$1$ circuits are learnable in polynomial time. This theorem naturally motivates the following questions.    

\begin{question}
    What is the complexity of learning circuits with $T$-depth $t$ (for some $t\geq 2$)?
    \end{question}
Recently, there have been a few works by Grewal et al.~\cite{grewal2023improvedlatest,grewal2023improved,grewal2022low} where they showed polynomial-time algorithms for learning states prepared by Clifford circuits  with~$O(\log n)$ many $T$ gates. They are also able to learn the output states of such circuits in polynomial time.\footnote{We remark that the states produced by these circuits have stabilizer rank $\leq n$, still leaving open the question we asked in the previous section.} 
     
    \begin{question}
    Can we learn $n$-qubit states and circuits that consist of $\omega(\log n)$ many $T$ gates? If not, is there a conditional hardness result one could show for learning these states?
\end{question}

\subsection{Learning phase states}
 In this section, we consider learning classical low-degree Boolean functions encoded as the amplitudes of quantum states, aka \emph{phase states}, which can be viewed as a generalization of stabilizer states. In recent times phase states have found several applications in cryptography, pseudo-randomness, measurement-based quantum computing, IQP circuits, learning theory~\cite{ji2018pseudorandom,brakerski2019pseudo,irani2021quantum,ananth2021cryptography,rossi2013quantum,takeuchi2019quantum,DBLP:journals/corr/abs-2208-07851}.

A degree-$d$ \emph{binary phase state} is a state of the form $\ket{\psi_f}=2^{-n/2}\sum_{x\in \01^n}(-1)^{f(x)}\ket{x}$ where $f:\01^n\rightarrow \01$ is a degree-$d$ function. Similarly, a degree-$d$ \emph{generalized phase state} is a state of the form   $\ket{\psi_f}=2^{-n/2}\sum_{x\in \01^n}\omega_q^{f(x)}\ket{x}$ where $f\, : \, \{0,1\}^n\to \ZZ_q$ is a degree-$d$ polynomial, $\omega_q=e^{2 \pi i/q}$ and $q$ is a prime. It is known that the output state of a random $n$-qubit Clifford circuit is a generalized $q=4$, degree-$2$ phase state with a constant probability~\cite{bravyi2016improved}, and 
a generalized degree-$d$ phase states with $q=2^d$ can be prepared from diagonal unitaries in the $d$-th level of the Clifford hierarchy~\cite{gottesman1999demonstrating,cui2017diagonal}. 
The learning question is: how many copies of $\ket{\psi_f}$ suffice to learn~$f$ exactly? Earlier works~\cite{bernstein1997quantum,montanaro2017learning,rotteler2009quantum}  showed $O(n)$ samples suffice for learning degree-$1$, and $O(n^2)$ suffices for learning degree-$2$ binary phase states; learning degree-$d$ for $d\geq 3$ has remained open (in fact it was plausible that it was a hard learning task since IQP circuits produce degree-$3$ phase states~\cite{montanaro2017circuits,bremner2011classical}). Sample complexity of learning \emph{generalized} phase states had not been studied before. 
In a recent work~\cite{DBLP:journals/corr/abs-2208-07851} they provided separable and entangled bounds for learning phase~states. Below we sketch the upper and lower bounds for the case of separable measurements. We refer the reader to~\cite{DBLP:journals/corr/abs-2208-07851} for the proof of the sample complexity with entangled measurements.

\textbf{Separable measurements} \emph{upper bound.} The proof makes the following simple observation: given $\ket{\psi_f}=2^{-n/2}\sum_x \omega_q^{f(x)}\ket{x}$, suppose we measure qubits $2,3,\ldots,n$ in the computational basis and  obtain~$y \in \{0,1\}^{n-1}$. The post-measurement state is then  
$
\ket{\psi_{f,y}} = ( \omega_q^{f(0y)} \ket{0} + \omega_q^{f(1y)} \ket{1})/\sqrt{2}.
$ 
If the base of the exponent was $(-1)$, then applying a Hadamard on $\ket{\psi_{f,y}}$ produces $c=f(0y)-f(1y)$. Their main idea is, it is still possible to obtain a value $b\in \ZZ_q$ such that $b\neq c$ with certainty. To this end, consider a POVM whose elements are given by $\M=\{\ketbra{\phi_b}{\phi_b}\}_{b\in \ZZ_q}$,
where $\ket{\phi_b}=( \ket{0} - \omega_q^{b} \ket{1})/\sqrt{2}$.
 Applying this POVM $\M$ onto an unknown state $(|0\ra+\omega_q^c|1\ra)/\sqrt{2}$ they observe that $c$ is the outcome with probability $0$ and furthermore one can show that \emph{every} other outcome $b\ne c$ appears with  probability $\Omega(d^{-3})$. Repeating this process $m=O(n^{d-1})$ many times, one obtains $(y^{(k)},b^{(k)})$ for $k=1,2,\ldots,m$ such that $f(1y^{(k)})-f(0y^{(k)})\neq b^{(k)}$ for all $k \in [m]$. Let $g(y)=f(1y^{(k)})-f(0y^{(k)})$ (i.e., $g=\nabla_1 f$). Clearly $g$ is a degree $\leq d-1$ polynomial. A non-trivial analysis in~\cite{DBLP:journals/corr/abs-2208-07851} shows the following:  the probability of having more than one polynomial 
degree-$d-1$ polynomial $g$ satisfying the constraints $g(y^k)\neq b^k$ is exponentially small if we choose $k=\tilde{O}(q^3 n^{d-1})$. Hence $k$ many copies $\ket{\psi_f}$, allows a learning algorithm to learn the derivative $\nabla_1 f$. Repeating this for $n$ directions, we can learn $\nabla_1 f,\ldots,\nabla_n f$, hence $f$.

\textbf{Separable measurements} \emph{lower bound.} Furthermore, they show that the above protocol is optimal even if allowed  single \emph{copy} measurements. The main  idea is the following: for a uniformly random degree-$d$ function $f$, suppose a learning algorithm measures the phase state $\ket{\psi_f}$ in an arbitrary orthonormal basis $\{U\ket{x}\}_x$. One can  show that the distribution describing the measurement outcome $x$ is ``fairly" uniform. In particular, 
$    \mathop{\mathbb{E}}_f [H(x|f)]\ge n - O(1), 
$ where $H(x|f)$ is the Shannon entropy of
a distribution $P(x|f)=|\la x|U^*|\psi_f\ra|^2$. To prove this, they first lower bound the Shannon entropy by Renyi-two entropy and prove a technical statement to bound the latter by deriving  an explicit formula for $\EE_f [|\psi_f\ra\la \psi_f|^{\otimes 2}]$. Thus, for a typical $f$, measuring one copy of the phase state $\ket{\psi_f}$ provides at most $O(1)$ bits of information about $f$. Since a random uniform
degree-$d$ polynomial $f$ with $n$ variables has entropy~$\Omega(n^d)$, one has to measure $\Omega(n^d)$ 
copies of $\ket{\psi_f}$ in order to learn $f$.



In~\cite{DBLP:journals/corr/abs-2208-07851} they also constructed a procedure to learn circuits (consisting of diagonal gates in the Clifford hierarchy) which produce phase states, leaving open the following:

\begin{question}
  What is the complexity of learning circuits consisting of \emph{non-diagonal} gates in the Clifford hierarchy?\footnote{When given query access to the circuit, Low~\cite{low2009learning} gave a procedure to learn the Clifford hierarchy.}
\end{question}
To this end, when given query access to the circuit, Low~\cite{low2009learning} gave a procedure to learn the Clifford hierarchy. However given only copies of $C\ket{0^n}$ where $C$ consists of \emph{non-diagonal} gates in the Clifford hierarchy, the question we ask is open.  Liang~\cite{liang2022clifford} recently showed a conditional hardness of learning Clifford circuits in the \emph{proper learning} setting. An open question from~\cite{DBLP:journals/corr/abs-2208-07851} which might improve our understanding of phase states is,  how many copies suffice to \emph{test} phase states.

\begin{question}
What is the complexity of  property testing degree-$d$ phase states? In particular, given copies of a state $\ket{\psi}$ promised it is either a degree-$d$ phase state or $\varepsilon$-far from the set of all degree-$d$ phase states, how many copies are necessary and sufficient to distinguish these cases?
\end{question}



\subsection{Gibbs states of local Hamiltonians}
In this section we discuss the problem of learning a Hamiltonian given copies of its Gibbs state. The setup of this learning problem is as follows: let $H$ be a local Hamiltonian $H=\sum_{\alpha=1}^m\mu_\alpha E_\alpha$ on $n$ qubits, where $E_\alpha$ is some local orthogonal operator basis such as the Pauli matrices, an algorithm receives copies of the Gibbs state $\rho_\beta(H) = \frac{e^{-\beta H}}{\Tr(e^{-\beta H})}$ and the goal is to  output a list of numbers $\mu':=\{\mu'_1, \mu'_2, \ldots ,\mu'_m\}$ that are close to $\mu:=\{\mu_1, \mu_2, \ldots ,\mu_m\}$ in either the $\ell_\infty$ or $\ell_2$ distance metric. We make the natural assumption that $\mu_1,\ldots ,\mu_m \in (-1,1)$, which simply says that each local interaction has bounded strength. Learning an unknown Hamiltonian from its Gibbs state has been studied in statistical physics and machine learning \cite{ChowLiu1968, hinton1986learning, Tanaka1998,Albert201499} for many decades,  known as the ``inverse Ising~problem". 

For machine learning, one is often interested in Ising interaction (that is, each $E_\alpha$ is a Pauli operator of the form $Z\otimes Z$) where the underlying interaction graph\footnote{An interaction graph has the qubits in the Hamiltonian as its vertices and each Ising interaction as its edge.} is sparse and unknown \cite{Bresler_learning, Interaction_screening, Klivans_learning}. Learning the Ising model also learns the very important underlying graph structure. In the quantum regime, we are far from being able to learn the underlying graph, solely under the sparsity assumption. Thus, we will assume that the underlying graph is known. For most physics applications, the graph can also respect the geometric constraints that arise from living in a low dimensional space. Before discussing algorithms for Hamiltonian learning, we first discuss motivation for considering this learning question. 

Hamiltonian learning can be a useful experimental tool in a variety of settings.
\begin{itemize}
    \item \textbf{Understanding the lattice structure:} Suppose we wish to know whether interactions in a given quantum material respect a Kagome lattice structure or a square lattice structure, assuming one of them is the case. This knowledge can significantly affect the physical properties, such as the electronic behaviour as looked at by~\cite{Jiang21}. If our Hamiltonian learning algorithm guarantees that $\|\mu'-\mu\|_\infty\leq \frac{1}{3}$, then we can figure out which edge is present or absent, in turn the lattice structure.
    \item 
    \textbf{Estimating the spectral gap of a Hamiltonian:} Another key quantity of interest is the spectral gap of a Hamiltonian, which dictates a myriad of ground state properties. For learning the spectral gap up to  constant precision error (say $0.1$), we need to know the Hamiltonian really well and the right regime to consider is $\|\mu'-\mu\|_1\leq 0.1$.  
    \item 
    \textbf{Effective Hamiltonians:} Local Hamiltonians are - after all - models of real interactions happening in physics. Effective Hamiltonians regularly arise when we wish to consider interactions between a specific set of particles or quasi-particles. These interactions can be hard to precisely determine theoretically, motivating the use of Hamiltonian learning \cite{SHBSGB22}. 
    \item 
    \textbf{Entanglement Hamiltonian:} Li-Haldane conjecture states that the marginals of a 2D gapped ground state are Gibbs state of a local Hamiltonian with temperature that depends on the location of the local term. Learning this Hamiltonian is directly relevant to understanding the entanglement structure of the system \cite{Kokail2021}. 
\end{itemize}

We remark that in the applications above, we did not specify the inverse temperature $\beta$ of the Gibbs state. In some cases, the temperature can be controlled, and then setting $\beta$ to be a small constant leads to optimal algorithms - see below. In other cases, such as for effective or entanglement Hamiltonians, temperature can be very low at the boundary of the region. Thus, efficient algorithms for Hamiltonian learning at all finite temperatures has interesting consequences in quantum computing.

\subsubsection{Sufficient statistics}
We now turn to designing algorithms for the learning task above. One natural question is: given an instance of Hamiltonian learning problem, is there any data about the Gibbs state that would suffice to learn the Hamiltonian? In other words, what are the `\emph{sufficient statistics}' for the Hamiltonian? The answer turns out to be very simple: they are the set of expectation values $f_\alpha= \Tr(E_\alpha \cdot \rho_\beta(H))$. 

There are two ways to prove that sufficient statistics suffice for learning
\begin{itemize}
    \item \textbf{Information theoretic argument:} Let's consider two Gibbs quantum states $\rho_\beta(H)$ and $\rho_\beta(G)$, where $H=\sum_\alpha \mu_\alpha E_\alpha$ and $G=\sum_\alpha \nu_\alpha E_\alpha$. We will argue that their ``distance'' is characterized by the expectation values. For this, we evaluate the symmetric relative entropy
\begin{eqnarray*}
\relent{\rho_\beta(H)}{\rho_\beta(G)}+\relent{\rho_\beta(G)}{\rho_\beta(H)}&=&\beta\Tr((H-G)(\rho_\beta(G)-\rho_\beta(H)))\\
&=& \beta\sum_\alpha(\mu_\alpha-\nu_\alpha)\Tr(E_\alpha(\rho_\beta(G)-\rho_\beta(H))),
\end{eqnarray*}
where the first equality follows by routine calculation. Thus, if $\Tr(E_\alpha \rho_\beta(H))=\Tr(E_\alpha \rho_\beta(G))$ for all $\alpha,$ the right hand side vanishes. The above argument also says that if $\Tr(E_\alpha \rho_\beta(H))\approx\Tr(E_\alpha \rho_\beta(G))$ then the relative entropy between the Gibbs quantum states is small. This is good enough to `learn' the Gibbs state up to small error in total variational distance. More~precisely, 
\begin{align*}
&\relent{\rho_\beta(H)}{\rho_\beta(G)}+\relent{\rho_\beta(G)}{\rho_\beta(H)} \\
&\leq \beta \max_\alpha|\mu_\alpha-\nu_\alpha|\cdot\left(\sum_\alpha|\Tr(E_\alpha \rho_\beta(H))-\Tr(E_\alpha \rho_\beta(G))|\right)\\
&\leq  2\beta \left(\sum_\alpha|\Tr(E_\alpha \rho_\beta(H))-\Tr(E_\alpha \rho_\beta(G))|\right).
\end{align*}
However, this estimate is not sufficient to guarantee the closeness of the Hamiltonians.
\item \textbf{Convexity of log partition function:} A more useful argument - for our problem description - is based on the convexity of the log partition function. The observation here is simply that the function $\log \Tr(e^{-\beta H})$ is a convex function in the parameters $\{\mu_1,\mu_2,\ldots, \mu_m\}$. The vector $(f_1,f_2,\ldots, f_m)$ of trace expectations then forms the gradient of this function. Furthermore, precise knowledge of the gradient can be used to identify the parameters $\mu_1,\ldots,\mu_m$. See Figure \ref{fig:gradientpf} (a).
\end{itemize}

\begin{figure}
\centering
\begin{subfigure}[b]{0.4\textwidth}
\centering
\begin{tikzpicture}[xscale=1, yscale=1]
\draw [thick, ->] (0.5,0.5) -- (0.5, 3.5);
\node at (0,3) {$\log Z$};
\draw [thick, ->] (0.5,0.5) -- (4.5, 0.5);
\node at (4,0) {$\mu$};
\draw [thick] (1,2) to [out=290, in=180] (2,1.5) to [out=0, in =260] (4, 3);
\draw [->] (2,1.17) -- (4,2.17);
\node at (3,2) {$\mu^*$};
\end{tikzpicture}
\caption{}
\end{subfigure}
\hspace{1cm}
\begin{subfigure}[b]{0.4\textwidth}
\centering
\begin{tikzpicture}[xscale=1, yscale=1]
\draw [thick, ->] (0.5,0.5) -- (0.5, 3.5);
\node at (0,3) {$\log Z$};
\draw [thick, ->] (0.5,0.5) -- (4.5, 0.5);
\node at (4,0) {$\mu$};
\draw [thick] (1,2) to [out=290, in=180] (2,1.5) to [out=0, in =260] (4, 3);
\draw [->] (2,1.17) -- (4,2.17);
\node at (3,2) {$\mu^*$};
\draw [red, ->] (3,1.17) -- (4.2,2.97);
\node at (3.7,2.6) {\textcolor{red}{$\mu'$}};
\end{tikzpicture}
\caption{}
\end{subfigure}
\caption{(a) Given the gradient of a convex function - such as the log partition function - there is a unique point that matches the gradient. (b) Strong convexity ensures that good knowledge of the gradient leads to good enough closeness to the desired point.}
\label{fig:gradientpf}
\end{figure}
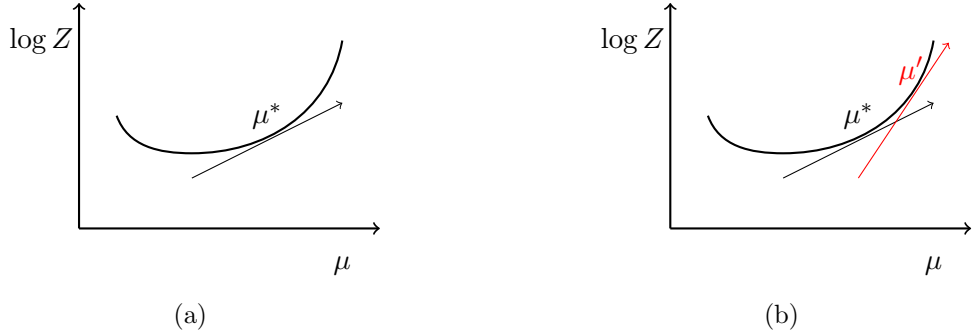

The quantities $f_\alpha$ can only be known approximately in experiments, due to statistical errors in estimation. Thus, a robust version of sufficient statistics is needed to develop an  algorithm for Hamiltonian learning. In \cite{anshu2021sample}, strong convexity of the log partition function was established. This roughly says that the log partition function ``curves well'' (see Figure \ref{fig:gradientpf} (b)). An algorithm - based on gradient descent - was constructed which uses  $O\left(m^3\cdot 1/\varepsilon^2\cdot \poly(1/\beta)\cdot {\exp(\poly(\beta)})\right)$ copies of the Gibbs state to learn the Hamiltonian with guarantee $\|\mu'-\mu\|_2\leq \varepsilon$. The time complexity of the algorithm depends on computing the gradient of the partition function. An efficient computation at high temperatures, for stoquastic Hamiltonians and 1D Hamiltonians - but requiring large run-time for low temperatures and arbitrary Hamiltonians. 

\subsubsection{Commuting Hamiltonians}
While the above algorithm based on sufficient statistics takes exponential time at low temperatures, classical Hamiltonians can be learned time-efficiently using more refined techniques - as noted earlier \cite{Bresler_learning, Interaction_screening, Klivans_learning}.  In fact, here we argue that \emph{commuting} Hamiltonians - that include classical Hamiltonians - can also be efficiently learned at any temperature, as long as the interaction graph is known. The algorithm is fundamentally different from the previous one that was based on estimating the expectation values $f_\alpha=\Tr(E_\alpha \rho_\beta(H))$. Consider $H=\sum_\ell h_\ell$, where the commutator $[h_\ell, h_{\ell'}]=0$. We note that this notation is different from the one we used earlier, in particular here the $h_\ell$s need not be an \textit{orthogonal} basis. The algorithm, sketched in \cite{AAKScomm21} is based on the following theorem. See also Figure \ref{fig:commeff}

\begin{theorem}
\label{thm:commeffec}
\cite{AAKScomm21} For any region $R$ on the lattice, define the effective reduced Hamiltonian $H_R= \frac{-1}{\beta}\log \tr_{R^c}\left(\rho_\beta\right)$.\footnote{Here the subscript in $\Tr$ refers to the registers being traced out. Moreover, $R^c$ is the set of qubits not in $R$.} Let $\partial R$ be the boundary of $R$, and $\partial_{-} R$ be the inner boundary of $R$ (which is the set of qubits in $R$ that interact with a qubit outside $R$). Then
$$
H_R= \alpha_RI + h_R + \Phi,
$$
where $\Phi$ is only supported on $\partial_{-} R$ and $[\Phi, h_R]=0$. Here, $\alpha_R$ is some real number and $\|\Phi\|\leq 2|\partial R|$.
\end{theorem}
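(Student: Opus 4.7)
The plan is to exploit the mutual commutativity of the local terms $\{h_\ell\}$ to compute the partial trace essentially by hand. Partition the terms of $H$ into three groups according to their support relative to $R$: let $h_R=\sum_{\mathrm{supp}(h_\ell)\subseteq R}h_\ell$, $h_{R^c}=\sum_{\mathrm{supp}(h_\ell)\subseteq R^c}h_\ell$, and let $h_\partial$ collect the remaining boundary terms (those whose support meets both $R$ and $R^c$; by definition their footprint inside $R$ lies in $\partial_- R$). Because every pair $h_\ell,h_{\ell'}$ commutes, so do $h_R,h_{R^c},h_\partial$, and hence $e^{-\beta H}=e^{-\beta h_R}e^{-\beta h_{R^c}}e^{-\beta h_\partial}$. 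Since $e^{-\beta h_R}$ is supported on $R$, it pulls through the partial trace:
$$
\tr_{R^c}\!\left(e^{-\beta H}\right)=e^{-\beta h_R}\cdot M,\qquad M:=\tr_{R^c}\!\left(e^{-\beta h_{R^c}}e^{-\beta h_\partial}\right).
$$

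Next I would argue that $M$, viewed as an operator on $R$, is supported only on $\partial_- R$: the product $e^{-\beta h_{R^c}}e^{-\beta h_\partial}$ acts nontrivially only on $R^c\cup\partial_-R$, so tracing out $R^c$ leaves an operator on $\partial_- R$. Define $\Phi:=-\tfrac{1}{\beta}\log(M/Z_{R^c})$ with $Z_{R^c}:=\tr(e^{-\beta h_{R^c}})$; then $\Phi$ is also supported on $\partial_- R$. Dividing by $Z=\tr(e^{-\beta H})$ and taking $-\tfrac{1}{\beta}\log$ formally gives $H_R=\alpha_R I+h_R+\Phi$ with $\alpha_R=\tfrac{1}{\beta}\log(Z/Z_{R^c})$, provided the exponentials of $h_R$ and $\Phi$ combine cleanly, i.e.\ provided $[h_R,\Phi]=0$.

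The commutation $[h_R,\Phi]=0$ is the one step where the interaction between partial trace and operator products needs to be checked. Any $h_\ell$ appearing in $h_R$ is supported on $R$ and, by hypothesis, commutes with both $h_{R^c}$ and $h_\partial$, hence with $e^{-\beta h_{R^c}}e^{-\beta h_\partial}$. Because $h_\ell$ acts trivially on $R^c$ it can be pulled out of $\tr_{R^c}$ on either side, so $h_\ell M=M h_\ell$, and therefore $[h_R,M]=0$ and $[h_R,\Phi]=0$. This justifies the combined exponential in the previous paragraph, and I expect this partial-trace/commutator juggling to be the main place where care is required — the rest of the argument is algebra made trivial by commutativity.

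Finally, for the norm bound: $h_\partial$ is a sum of at most $|\partial R|$ commuting terms each of operator norm at most $1$, so $\|h_\partial\|\le|\partial R|$ and $e^{-\beta|\partial R|}I\le e^{-\beta h_\partial}\le e^{\beta|\partial R|}I$. Multiplying by the positive commuting operator $e^{-\beta h_{R^c}}$ preserves these sandwich inequalities, and partial trace over $R^c$ preserves operator inequalities as well, giving
$$
e^{-\beta|\partial R|}Z_{R^c}I_R\ \le\ M\ \le\ e^{\beta|\partial R|}Z_{R^c}I_R.
$$
Applying $-\tfrac{1}{\beta}\log$ yields $\|\Phi\|\le|\partial R|$, which is well within the stated bound $2|\partial R|$, completing the proof.
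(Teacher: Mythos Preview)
The paper does not actually prove Theorem~\ref{thm:commeffec}; it is stated with a citation to \cite{AAKScomm21} and then used as a black box for the learning algorithm. So there is no in-paper proof to compare against.

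That said, your argument is correct and is precisely the natural route one expects for commuting Hamiltonians: split $H=h_R+h_{R^c}+h_\partial$, factor the exponential by commutativity, pull $e^{-\beta h_R}$ through the partial trace, and identify $\Phi$ as $-\tfrac{1}{\beta}\log$ of the remaining boundary piece. Your verification that $[h_R,\Phi]=0$ via ``$h_\ell$ acts trivially on $R^c$, hence slides through $\tr_{R^c}$ on either side'' is the right justification, and the sandwich argument for the norm bound is clean. The only place to be a touch more careful is the count ``at most $|\partial R|$ boundary terms'': depending on how $\partial R$ is defined (edges vs.\ sites) and the degree of the interaction graph, the number of crossing terms need not literally equal $|\partial R|$ --- this is presumably why the stated bound carries the extra factor of~$2$. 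With that caveat, your proof is complete.
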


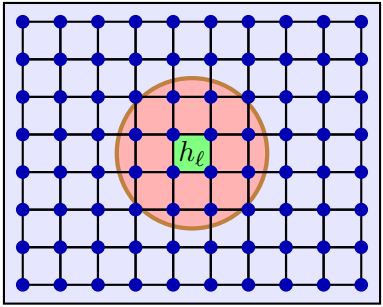
\begin{figure}
\centering
\begin{tikzpicture}[xscale=0.5,yscale=0.5]
\draw [fill=blue!10!white, thick] (0.5,0.5) rectangle (10.5, 8.5);

\draw [ultra thick, brown, fill=red!30!white] (5.5,4.5) circle [radius=2];

\foreach \i in {1,...,5}
{
\foreach \j in {1,...,4}
\draw [thick] (2*\i-1,2*\j-1) rectangle (2*\i,2*\j);
}

\foreach \i in {1,...,4}
{
\foreach \j in {1,...,3}
\draw [thick] (2*\i,2*\j) rectangle (2*\i+1,2*\j+1);
}

\foreach \i in {1,...,5}
{
\foreach \j in {1,...,3}
\draw [thick] (2*\i-1,2*\j) rectangle (2*\i,2*\j+1);
}

\draw [thick, fill=green!50!white] (5,4) rectangle (6,5);

\node at (5.5,4.5) {$h_\ell$};

\foreach \i in {1,...,4}
{
\foreach \j in {1,...,4}
\draw [thick] (2*\i,2*\j-1) rectangle (2*\i+1,2*\j);
}

\foreach \i in {1,...,10}
{
\foreach \j in {1,...,8}
   \draw (\i, \j) node[circle, fill=black!30!blue, scale=0.5]{};
}
\end{tikzpicture}
\caption{Consider the marginal of the Gibbs state in the brown circle. For commuting Hamiltonians, this marginal is the Gibbs state of a Hamiltonian that is the boundary correction to the Hamiltonian strictly within the region.}
\label{fig:commeff}
\end{figure}

Using this theorem, the learning algorithm is straightforward: perform good enough tomography of the region around an interaction $h_\ell$ to reconstruct the marginal to very high accuracy. Then take log of the marginal, followed by computing each $h_\ell$ up to error $\varepsilon$ (i.e., output a $h'_\ell$ such that $\|h'_\ell-h_\ell\|\leq\varepsilon$). This is good enough to estimate the unknown Hamiltonian $H$. The resulting sample complexity~\cite{AAKScomm21} is
$\exp({\mathcal{O}(\beta k^D)})\cdot \mathcal{O}\left(1/{\varepsilon^2}\cdot \log (m/\delta)\right),
$
where $k$ is the locality of the Hamiltonian, $D$ is the degree of the underlying interaction graph and $\delta$ is the probability of failure. Time complexity is 
$m\cdot {\exp({\mathcal{O}(\beta k^D)}})\cdot \mathcal{O}\left(1/{\varepsilon^2}\cdot \log (m/\delta)\right)$.

\subsubsection{High temperature Gibbs states}
The idea of using effective reduced Hamiltonian in Theorem~\ref{thm:commeffec} can also be applied to non-commuting Hamiltonians, as long as the temperature is high enough (or $\beta$ smaller than the critical temperature~$\beta_c$, which is a constant). This follows from a similar result as Theorem \ref{thm:commeffec} shown by \cite{KKB20} using cluster expansion, with $H_R$ approximated by $h_\ell+\Phi$ as $\|H_R-h_{\ell}-\Phi\|_\infty\leq \exp(-\Omega(r))$ for a spherical region $R$ of radius $r$ around $\ell$ (see Figure \ref{fig:noncommeff} for an example).

\begin{figure}
\centering
\begin{tikzpicture}[xscale=0.5,yscale=0.5]
\draw [fill=blue!10!white, thick] (0.5,0.5) rectangle (10.5, 8.5);

\draw [ultra thick, brown, fill=red!30!white] (5.5,4.5) circle [radius=3];

\draw [<->] (2.5, 0) -- (8.5, 0);
\node at (5.5, -0.4) {\small $2r$};

\foreach \i in {1,...,5}
{
\foreach \j in {1,...,4}
\draw [thick] (2*\i-1,2*\j-1) rectangle (2*\i,2*\j);
}

\foreach \i in {1,...,4}
{
\foreach \j in {1,...,3}
\draw [thick] (2*\i,2*\j) rectangle (2*\i+1,2*\j+1);
}

\foreach \i in {1,...,5}
{
\foreach \j in {1,...,3}
\draw [thick] (2*\i-1,2*\j) rectangle (2*\i,2*\j+1);
}

\draw [thick, fill=green!50!white] (5,4) rectangle (6,5);

\node at (5.5,4.5) {$h_\ell$};

\foreach \i in {1,...,4}
{
\foreach \j in {1,...,4}
\draw [thick] (2*\i,2*\j-1) rectangle (2*\i+1,2*\j);
}

\foreach \i in {1,...,10}
{
\foreach \j in {1,...,8}
   \draw (\i, \j) node[circle, fill=black!30!blue, scale=0.5]{};
}
\end{tikzpicture}
\caption{In the high temperature regime, it has been shown by \cite{KKB20} that the marginal of a Gibbs state is still the Gibbs state of the original Hamiltonian (within the brown circle of radius $r$) up to  boundary correction. However, the boundary term has some support within the circle and has strength $\approx e^{-r}$ near the center. Thus, to learn $h_\ell$, we need to make sure that $e^{-r}$ is small~enough.}
\label{fig:noncommeff}
\end{figure}
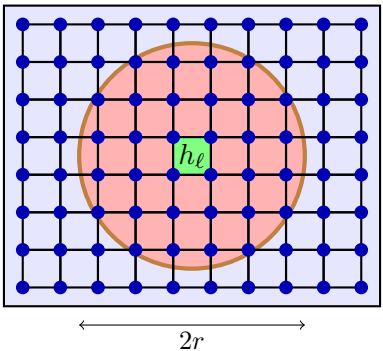
If we use the same approach as above, to estimate each $h_\ell$ with error $\varepsilon$, we thus need $r=O(\log 1/\varepsilon)$. The sample complexity now incurs an additional factor of $\exp(r^D)$, where $D$ is the lattice dimension or the degree of the graph. Thus, the sample complexity is
$O\Big(\exp({\beta (\log\frac{1}{\varepsilon})^D})\cdot 1/{\varepsilon^2}\cdot \log ({m}/{\delta})\Big)$  
and time complexity is 
$O\Big(m\cdot \exp({(\log\frac{1}{\varepsilon})^D})\cdot 1/{\varepsilon^2}\cdot \log({m}/{\delta})\Big)$. For constant $\varepsilon$, this is very efficient; however for $\varepsilon=\frac{1}{m}$, in which case the $\ell_1$ error of learning is small enough, the sample complexity is super-polynomial in $m$. This is somewhat unsatisfactory, as this approach seems worse than \cite{anshu2021sample} in the regime where each local term has to be learned very accurately. In a subsequent work~\cite{haah2021optimal} provided a unifying and complete answer for $\beta<\beta_c$. Employing the cluster expansion method of \cite{KKB20, KS20}, the authors directly express the sufficient statistics $\Tr(E_\alpha \rho_{\beta}(H))$ as an infinite series in $\beta$ with coefficients polynomial in the local Hamiltonian terms. Approximate knowledge of the sufficient statistics is then inverted to estimate the Hamiltonian terms. They achieve tight sample and time complexity of
$\mathcal{O}\left(1/\varepsilon^2 \cdot \log ({m}/{\delta})\right)$ {and} $\mathcal{O}\left(m/\varepsilon^2\cdot \log ({m}/{\delta})\right)$ respectively.\footnote{We note that there were gaps in the above results - that originated in \cite{KS20} - were fixed in \cite{WildA22}. See \cite{haah2021optimal} for a discussion.}

\subsubsection{Discussion}

The problem of time efficient Hamiltonian learning - on a fixed geometry and at arbitrary $\beta$ - remains open. The fact that this is possible in the commuting case is encouraging, as there is no prior reason to expect that the commuting and non-commuting cases would be fundamentally different. Indeed, very good heuristic methods exist for the task \cite{Aradlearning, Qi2019learningFromGroundState}. We end this section with two relevant open questions.

\begin{question}
    Can we achieve Hamiltonian learning under the assumption that the Gibbs states satisfy an approximate conditional independence?\footnote{Given a quantum state $\rho$ on registers $A,B,C$, $\mathrm{I}(A:C|B)_{\rho} = S(\rho_{AB})+S(\rho_{BC})- S(\rho_B)-S(\rho_{ABC})$ is the quantum conditional mutual information. We say that $\rho$ satisfies approximate conditional independence if $\mathrm{I}(A:C|B)_{\rho}\approx 0$.} 
\end{question}

Approximate conditional independence is known to hold in 1D \cite{KatoB2019} and conjectured to hold for every dimension. 

\begin{question}
Pseudorandomness is a bottleneck for learnability. If a family of quantum states are pseudo-random, then polynomially many copies of the state are indistinguishable from Haar random states by any efficient quantum algorithm. Could low temperature Gibbs states be pseudorandom, which would explain the difficulty in finding time efficient algorithm?
\end{question}

\subsection{Matrix product states}
Matrix Product States (MPS) are a widely used representation of quantum states on a spin-chain. Mathematically, a state $\ket{\psi}\in (\mathbb{C}^{d})^{\otimes n}$ is a \emph{matrix product state} (MPS) if $\ket{\psi}$ can be written as
	$$
	\ket{\psi}=\sum_{i_1,\ldots,i_n \in [d]} \Tr(\MPS{A}{1}{i_1} \cdot \MPS{A}{2}{i_2}\cdots \MPS{A}{n}{i_n}) \, \ket{i_1,\ldots,i_n},
	$$
	where for all $j \in [n], i \in [d]$, $\MPS{A}{j}{i}$ is a  $D_j\times D_{j+1}$ matrix. We call the set of matrices $\{\MPS{A}{j}{i}\}$ an \emph{\textsf{MPS} representation of $\ket{\psi}$}. We refer to $D=\max_{j} D_j$ as the \emph{bond dimension} of $\ket{\psi}$, when minimized over all \textsf{MPS} representations. Many physically relevant $n$-qubit quantum states - such as gapped ground states - can be approximated by \textsf{MPS} with bond dimension polynomial in $n$. A learning algorithm for an \textsf{MPS} state $\rho$ takes as input, copies of $\rho$ promised to be an \textsf{MPS} of certain bond dimension $D$ and outputs an \textsf{MPS} of bond dimension $D'$ that approximates $\rho$ in fidelity. The goal is learn these states with polynomial sample and time complexity  along with keeping $D'$ close to~$D$.

\begin{figure}[h]
\centering
\begin{tikzpicture}[xscale=0.5,yscale=0.5]

\draw [ultra thick] (0,4.5) -- (16,4.5);

\foreach \j in {0,...,4}
{
\draw [ultra thick, blue] (4*\j,4.5) -- (4*\j,6);
\draw [ultra thick, brown, fill=red!30!white] (4*\j,4.5) circle [radius=1];
\node at (4*\j,4.5) {$A_{\j}$};
}
\end{tikzpicture}
\caption{A matrix product state is specified by its bond dimension $D$ and a sequence of $D\times D$ matrices.  The virtual bonds (black lines) indicate the amount of entanglement and the physical blue lines represent qudits.}
\label{fig:MPS}
\end{figure}
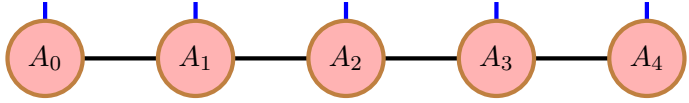
Unlike Gibbs states, local observable statistics do not always determine an MPS. For example, consider the CAT states $\frac{1}{\sqrt{2}}\ket{00\cdots 0} \pm \frac{1}{\sqrt{2}}\ket{11\cdots 1}$, which are \textsf{MPS} of bond dimension 2. These states can't be distinguished on any set of $n-1$ qubits. Thus, any algorithm for \textsf{MPS} must make global measurements. Indeed, \cite{LLP10, CPFSGBLPL10} gave a polynomial time algorithm to learn an MPS, using global-but-efficient measurements. Their algorithm learns a sequential circuit that prepares the MPS. The resulting output has bond dimension $D'=\text{poly}(D)$. 

\begin{figure}[h]
\centering
\begin{tikzpicture}[xscale=0.5,yscale=0.5]

\draw [ultra thick] (0,4.5) -- (16,4.5);

\foreach \j in {0,...,4}
{
\draw [ultra thick, blue] (4*\j,4.5) -- (4*\j,6);
\draw [ultra thick, brown, fill=red!30!white] (4*\j,4.5) circle [radius=1];
\node at (4*\j,4.5) {$A_{\j}$};
}

\draw [blue] (-2,3) rectangle (10,6.5); 
\draw [blue] (2,2.5) rectangle (14,6); 

\end{tikzpicture}
\caption{Under coarse graining (blue rectangles) the physical dimension exceeds the bond dimension. For example, if $D=4$ and each physical blue line is a qutrit, the physical dimension of blue regions is $3^3=27$, which is larger than the total bond dimension at the boundary $4^2=16$. For typical tensors $A_1, \ldots, A_n$, this makes the map from  the virtual bonds to physical qudits invertible. Such an \textsf{MPS} is injective.}
\label{fig:MPSinj}
\end{figure}
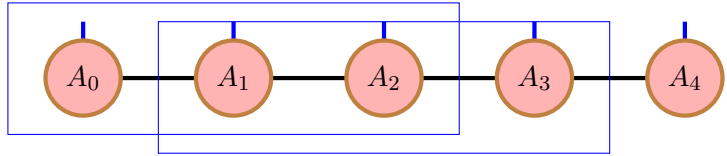

However, local measurements are more ideal in the experimental settings. Under the assumption of injectivity (see Figure~\ref{fig:MPSinj}) learning an \textsf{MPS} with just local measurements may be possible. Let us first observe that a `local' sufficient statistics holds for injective \textsf{MPS}. This is because the marginals on $O(\log D)$ qudits determine the parent Hamiltonian which has the \textsf{MPS} as its unique ground state. Barring the statistical errors - which can be addressed using injectivity - the knowledge of parent Hamiltonian allows one to reconstruct an approximation to the \textsf{MPS} state using the rigorous algorithm in \cite{LandauVV15, AradLVV17}. A drawback of this approach is that the algorithm in \cite{LandauVV15, AradLVV17} outputs an \textsf{MPS} with bond dimension $D'=\text{poly}(n)$, which may be much larger than a constant $D$.  Thus, we see a large blowup in bond dimension of the output MPS.   Cramer et al.~\cite{CPFSGBLPL10} proposes a heuristic efficient algorithm based on the singular value thresholding algorithm in which the output bond dimension does not suffer such blow-up; however there is no guarantee that the output \textsf{MPS} is  close to the input MPS. This leads us to the following~question:
\begin{question}
    Can an injective \textsf{MPS} be learned efficiently using local measurements, with the output bond dimension $D'=\poly(D)$? 
\end{question}
A possible direction is to improve~\cite{LandauVV15,AradLVV17} under suitable~guarantees.
\begin{question}
 Promised that the ground state is an \textsf{MPS} of bond dimension $D$, can the algorithm in \cite{LandauVV15,AradLVV17} be improved to produce an output \textsf{MPS} that also has $\poly(D)$ bond dimension? 
\end{question}
Projected Entangled Pair States (\textsf{PEPS}) are higher dimensional generalizations of Matrix Product States. Learnability of \textsf{PEPS} is far from clear, even in terms of sample complexity. Observe that the parent Hamiltonians of \textsf{PEPS} are frustration-free and locally-gapped. A natural question is, can the learning task become simpler assuming injectivity?
\begin{question}
    Given an injective \textsf{PEPS}, an approximation to the parent Hamiltonian can be learned with polynomial sample and time complexity. Can this be used to write down a description of another \textsf{PEPS} that represents a similar state? 
\end{question}
The key bottleneck above is that there is no two dimensional analogue of \cite{LandauVV15}, despite an area law for locally-gapped frustration-free spin systems~\cite{AAG22}. 
\section{Alternate models of learning quantum states}
\label{sec:alternatemodels}
In order to perform full state tomography on $n$ qubits we saw that it is necessary and sufficient to obtain $\Theta(2^{2n})$ many copies of the unknown state. The exponential scaling of the complexity is prohibitive for experimental demonstrations. of course a natural question is, is it  necessary to approximate the unknown state up to  small trace distance? In particular, do there exist weaker but still practically useful learning goals, with smaller sample complexity? These questions have led some to consider  learning only the `useful' properties of a unknown quantum state. There have been several models of learning quantum states (inspired by computational learning theory) where exponential savings in sample complexity is possible and we discuss these models in this section.

\subsection{$\PAC$ learning and online learning}

\paragraph{$\PAC$ learning.} In a seminal work, Valiant~\cite{DBLP:journals/cacm/Valiant84} introduced the \emph{Probably Approximately Correct} ($\PAC$)  model of learning which lays the foundation for computational learning theory.  In this model, there is a concept class consisting of Boolean functions  $\Cc\subseteq \{c:\01^n\rightarrow \01\}$ and an underlying distribution $D:\01^n\rightarrow [0,1]$. The learning algorithm is provided with labelled examples of the form $(x,c(x))$ where $x$ is sampled from the distribution $D$.\footnote{We assume that the concept class is Boolean here, one could also consider real-valued classes where $c(x)$ is then specified up to certain bits of precision.} We say a learning algorithm $(\varepsilon,\delta)$-learns a concept class $\Cc$ if it satisfies the~following:
\begin{quote}
    For every $c\in \Cc$, distribution $D:\01^n\rightarrow [0,1]$, given labelled examples $(x,c(x))$ where $x$ is sampled from $D$: with probability at least $1-\delta$, the algorithm outputs $h:\01^n\rightarrow \pmset{}$ such that $\Pr_{x\sim D} [h(x)= c(x)]\geq 1-\varepsilon$.
\end{quote}
The sample complexity of a learning algorithm $\A$ is the maximum number of labelled examples,  over all the concepts $c\in \Cc$ and distributions $D$.   The $(\varepsilon,\delta)$-sample complexity of a concept class $\Cc$ is the minimum sample complexity over all $(\varepsilon,\delta)$-$\PAC$ learners $\A$ for $\Cc$. Similarly, one can define the sample complexity (resp.~time complexity)  of $(\varepsilon,\delta)$-learning $\Cc$  as the samples used (resp.~time taken) by the $(\varepsilon,\delta)$-learning~algorithm. There have been many works in classical literature that have looked at \emph{distribution-dependent} $\PAC$ models wherein the distribution $D$ is known to the learner and the algorithm needs to perform well under~$D$.

Aaronson~\cite{aaronson2007learnability} considered the natural analog of learning quantum states in the $\PAC$ model. In this model of learning, the  concept class $\Cc$ is a collection of functionals described by an unknown quantum states, $\rho \in \Cc$ acting on the class of measurements operators $\calE$ and $D:\calE\rightarrow [0,1]$ is an unknown distribution over all possible 2-outcome measurements.   A quantum learning algorithm obtains several examples of the form $(E_i,\Tr(\rho E_i))$ where $E_i$ is drawn from the distribution $D$ and the goal is to approximate $\rho$. We say a learning algorithm $(\varepsilon,\delta,\gamma)$-learns $\Cc$ if it satisfies the~following: 
\begin{quote}
    For every $\rho\in \Cc$, given examples $(E_i,\Tr(\rho E_i))$ where $E_i\sim D$, with probability at least $1-\delta$, output $\sigma$ that satisfies $\Pr_{E\sim D}[|\Tr(\rho E)-\Tr(\sigma E)|\leq \varepsilon]\geq 1-\gamma$.
\end{quote}
In contrast to tomography where the output state $\sigma$ is close to the unknown $\rho$ in trace distance, i.e., $\sigma$ should satisfy $\max_{E}|\Tr(E\rho)-\Tr(E\sigma)|\leq \varepsilon$, in $\PAC$ learning the goal is for $\Tr(E\rho)$ to be close to $\Tr(E\sigma)$ for \emph{most $E$s}. In a surprising result, Aaronson showed that the class of all $n$-qubit quantum states can be $\PAC$-learned using just  $O(n)$ samples.
\begin{theorem}
The sample complexity of $\PAC$ learning $n$-qubit quantum states is $O(n\cdot \textsf{poly}(1/\varepsilon,1/\delta,1/\gamma))$. 
\end{theorem}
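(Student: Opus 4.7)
The plan is to reduce the problem to a classical statistical learning theory result by bounding an appropriate combinatorial complexity measure of the class of quantum states. First, I would define the real-valued function class associated with $\Cc$: for each $\rho\in\Cc$, let $f_\rho:\calE\to[0,1]$ be defined by $f_\rho(E)=\Tr(\rho E)$. The $\PAC$ task then becomes a regression problem over $\{f_\rho\}_{\rho\in\Cc}$ under the distribution $D$ on measurements. By standard uniform convergence results for real-valued function classes (e.g., Alon--Ben-David--Cesa-Bianchi--Haussler, Anthony--Bartlett), if the $\gamma$-fat shattering dimension $\sfat_\gamma(\Cc)$ is bounded by some $d(\gamma)$, then $m=O(d(\gamma/c)\cdot\poly(1/\varepsilon,1/\delta,1/\gamma))$ i.i.d.\ examples suffice: any hypothesis $\sigma$ whose predicted values $\Tr(\sigma E_i)$ match the observed values $\Tr(\rho E_i)$ on the sample within margin $O(\gamma)$ is, with probability at least $1-\delta$, a valid output of the $\PAC$ learner.

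The heart of the argument is bounding $\sfat_\gamma(\Cc)$ for $\Cc$ the class of all $n$-qubit mixed states. Suppose measurements $E_1,\ldots,E_m$ together with thresholds $r_1,\ldots,r_m$ witness $\gamma$-shattering, so that for every sign pattern $b\in\{0,1\}^m$ there is a state $\rho_b$ with $\Tr(\rho_b E_i)\geq r_i+\gamma$ if $b_i=1$ and $\Tr(\rho_b E_i)\leq r_i-\gamma$ otherwise. I would interpret this as a quantum random access code: Alice, holding $b$, prepares $\rho_b$ on $n$ qubits and hands it to Bob; to recover $b_i$, Bob applies the two-outcome measurement $\{E_i, I-E_i\}$ and compares the outcome frequency to $r_i$, succeeding with bias $\Omega(\gamma)$ over chance. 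Invoking the Nayak / Ambainis--Nayak--Ta-Shma--Vazirani lower bound on quantum random access codes, encoding $m$ bits this way requires $n=\Omega(m\gamma^2)$ qubits, yielding $\sfat_\gamma(\Cc)=O(n/\gamma^2)$.

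Combining the fat-shattering bound with the uniform convergence theorem delivers the target sample complexity $O(n\cdot\poly(1/\varepsilon,1/\delta,1/\gamma))$, where the $\PAC$ learner simply outputs any $\sigma$ consistent with the sampled $(E_i,\Tr(\rho E_i))$ pairs up to margin $\gamma/3$. The main obstacle I anticipate is the quantitative conversion in the random-access-code step: one has to turn the geometric shattering margin $\gamma$ cleanly into an information-theoretic decoding advantage that yields the sharp $\Omega(m\gamma^2)$ bound rather than some weaker tradeoff, which is where single-shot measurement-based distinguishability arguments and the $Q\to R$ reduction for QRACs get delicate. The remaining step — converting the $\sfat_\gamma$ bound into a sample-complexity statement via empirical-margin risk minimization — is essentially off-the-shelf from agnostic real-valued learning theory and requires no quantum-specific reasoning.
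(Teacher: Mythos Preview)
Your proposal is correct and is precisely Aaronson's original argument; the paper itself does not give a proof but defers to \cite[Theorem~4.16]{DBLP:journals/sigact/ArunachalamW17}, which presents exactly the fat-shattering-via-QRAC route you outline. The concern you flag about converting the $\gamma$-margin into a decoding advantage is not actually delicate: if Bob measures $\{E_i,\id-E_i\}$ and simply outputs the outcome as his guess for $b_i$, then conditioned on a uniformly random $b_i$ his success probability is at least $\tfrac{1}{2}(r_i+\gamma)+\tfrac{1}{2}(1-r_i+\gamma)=\tfrac{1}{2}+\gamma$, and Nayak's bound $n\geq (1-H(\tfrac{1}{2}+\gamma))m=\Omega(\gamma^2 m)$ applies directly.
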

Similarly, Cheng et al.~\cite{cheng2015learnability} considered the ``dual problem" of learning a quantum measurement in the $\PAC$ learning framework. We do not prove these theorems here, we refer the reader to the survey~\cite[Theorem~4.16]{DBLP:journals/sigact/ArunachalamW17}.  A natural question left open by Aaronson was, what classes are states are \emph{time-efficiently} $\PAC$ learnable? To this end, Rocchetto~\cite{rocchetto2017stabiliser} observed that the class of stabilizer states is $\PAC$ learnable in polynomial time. The learning algorithm of Rocchetto assumed that the distribution $D$  was over Pauli observables and he crucially used that $\Tr(P \rho)\in \{-1,1,0\}$ when $\rho$ was a stabilizer state. This allowed Rocchetto to learn the stabilizers of the unknown $\rho$ and with some extra work, the entire stabilizer state $\rho$.  A natural question that remains open is the following.
\begin{question}
What is the time complexity of $\PAC$ learning states prepared by Clifford circuits with $t$ many T gates? What is the $\PAC$ sample complexity of learning stabilizer-rank $k$ states?
\end{question}
 Gollakota and  Liang~\cite{gollakota2022hardness} considered a natural question of learning stabilizer states in the presence of noise. They looked at a restrictive version of $\PAC$ learning, called \emph{statistical query} learning (we discuss this model in further detail in Section~\ref{sec:sqlearning}). Here the learning algorithm is allowed to make single-copy ``queries" to  learn the unknown \emph{noisy} stabilizer state (the noise model they consider is the depolarizing noise). In this model,~\cite{gollakota2022hardness} showed that learning stabilizer states with noise is as hard as learning parities with noise (LPN) using classical samples (which is believed to require exponentially many samples~\cite{blum2003noise}).

\paragraph{Online learning.} Subsequently, Aaronson et al.~\cite{aaronson2018online}, Chen et al.~\cite{chen2022adaptive}  looked at the setting of online learning quantum states (inspired by the classical model of online learning  functions). The online model can be viewed as a variant of tomography and $\PAC$ learning. Consider the setting of tomography, suppose it is infeasible to possess $T$-fold tensor copies of a quantum state $\rho$, but instead we can obtain only sequential copies of $\rho$. The quantum online learning model consists of repeating the following rounds of interaction: the learner obtains a copy of $\rho$ and a description of measurement operator $E_i$ (possibly adversarially) and uses it to predict the value of $\Tr(\rho E_i)$. In the $i$th round, if the learners prediction was $\alpha_i$ and $\alpha_i$ satisfies $|\Tr(\rho E_i)- \alpha_i|\leq \varepsilon$ then it is correct, otherwise it has made a mistake. The goal of the learner is the following: minimize $m$ so that after making $m$ mistakes (not necessarily consecutively), it makes a correct prediction on \emph{all} future rounds. Aaronson~\cite{aaronson2018online} showed that it suffices to let $m$ be 
  the \emph{sequential fat-shattering dimension} of~$\calC$, denoted $\sfat(\Cc)$ (a combinatorial parameter  introduced in~\cite{rakhlin2015online} to understand classical online learning), which in turn can be upper bounded by $O(n/\varepsilon^2)$ for the class of $n$-qubit quantum states.

\subsection{Shadow tomography}
A caveat of the quantum $\PAC$ learning model is that the learning algorithm has to only perform well under a distribution and it is a priori unclear if the $\PAC$ model is a natural model of learning.  Aaronson~\cite{aaronson:shadow} introduced another learning model called \emph{shadow tomography}. Here, the goal of a learning algorithm algorithm is as follows: let $E_1,\ldots,E_m$ be positive semi-definite operators satisfying $\|E_i\|\leq 1$, how many copies of an $n$-qubit state $\rho$ are necessary and sufficient in order to estimate $\Tr(\rho E_1),\ldots,\Tr(\rho E_m)$ up to additive error $\varepsilon$. There are two naive protocols for this task: $(i)$ either do quantum state tomography which takes $\exp(n)$ many copies and allows to estimate $\Tr(\rho E_i)$ for all $i$, or $(ii)$ take $O(m/\varepsilon^2)$ many copies of $\rho$ and estimate up to error $\varepsilon$ each of the $\Tr(\rho E_i)$s. Surprisingly, Aaronson showed that one can perform the task of shadow tomography exponentially better in both $m$ and~$n$  in \emph{sample complexity}, but still running  in time exponential~in~$n$.

\begin{theorem}
There is a protocol for shadow tomography that succeeds with probability $\geq 2/3$ using $\widetilde{O}((n\log^4 m)/\varepsilon^4)$ many copies of $\rho$.
\end{theorem}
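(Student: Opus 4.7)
The plan is to combine two ingredients: an online learner for quantum states that attains a small mistake bound, and a \emph{gentle search} subroutine that scans the $m$ observables for one on which the current hypothesis disagrees with $\rho$, without consuming too many copies of $\rho$. The overall algorithm is iterative: maintain a hypothesis $\sigma_t$ (initialized to $\sigma_0 = I/2^n$), then in round $t$ use gentle search on fresh copies of $\rho$ to either (a) certify that $|\Tr(\sigma_t E_i) - \Tr(\rho E_i)| \leq \varepsilon$ for all $i \in [m]$, in which case we output $\sigma_t$, or (b) produce an index $i_t$ for which $|\Tr(\sigma_t E_{i_t}) - \Tr(\rho E_{i_t})|$ is appreciably larger than $\varepsilon$, along with an estimate of $\Tr(\rho E_{i_t})$. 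The feedback $(E_{i_t}, \widehat{\Tr(\rho E_{i_t})})$ is passed to the online learner to produce $\sigma_{t+1}$.

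The first ingredient I would invoke is the online learning theorem for quantum states described in the preceding subsection: there is a learner (essentially matrix multiplicative weights / regularized follow-the-leader over the density matrix simplex) whose mistake bound on any sequence of two-outcome observables is $T = O(n/\varepsilon^2)$, since $\sfat_\varepsilon(\mathcal{C}_n) = O(n/\varepsilon^2)$ for the class of $n$-qubit states. This caps the number of iterations of the outer loop at $T$, independent of~$m$.

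The second and technically more delicate ingredient is the gentle search subroutine. Given a hypothesis $\sigma_t$ (which is a classical description and hence can be evaluated exactly on each $E_i$), I want to distinguish, for each $i$, the hypotheses $|\Tr(\rho E_i) - \Tr(\sigma_t E_i)| > \varepsilon$ versus $\leq \varepsilon/2$. The key tool is amplitude estimation combined with the Quantum OR / gentle measurement lemma (the ``Threshold Search'' primitive of B\u{a}descu and O'Donnell): using $\widetilde{O}(\log^2 m / \varepsilon^2)$ copies one can either certify that no $i$ is a violator or return one, while the post-measurement residue is close in trace distance to the true $\rho^{\otimes k}$, so subsequent rounds can still estimate $\Tr(\rho E_{i_t})$ to accuracy $\varepsilon/4$ from $\widetilde{O}(1/\varepsilon^2)$ additional copies. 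I then amplify the success probability of each call to $1 - 1/\mathrm{poly}(nT)$ so that a union bound over $T$ rounds gives total success probability $\geq 2/3$; this amplification contributes the extra logarithmic factors. Multiplying the per-round cost $\widetilde{O}(\log^4 m / \varepsilon^2)$ by the mistake bound $T = O(n/\varepsilon^2)$ gives the claimed $\widetilde{O}(n \log^4 m / \varepsilon^4)$ sample complexity.

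The main obstacle is establishing that the gentle search does not catastrophically disturb the remaining copies of $\rho$ across the long sequence of adaptive measurements. This is precisely where the gentle measurement / quantum union bound is essential: one must show that conditioning on a rare event (finding a violator) and projecting out the ``all non-violating'' branch preserves enough fidelity that subsequent independent batches of copies of $\rho$ behave as if undisturbed. Once that is in hand, the online learner guarantees that the outer loop terminates within $T$ iterations with a $\sigma_T$ that satisfies $|\Tr(\rho E_i) - \Tr(\sigma_T E_i)| \leq \varepsilon$ for every~$i$, completing the proof.
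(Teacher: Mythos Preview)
Your high-level architecture is right and does yield a valid shadow tomography protocol, but it is not the argument the paper sketches for this theorem. The paper follows Aaronson's original proof, which uses the \emph{postselected learning} update rule: starting from $\rho_0=\id/d$, the hypothesis $\rho_t$ is updated to $\rho_{t+1}$ by applying the two-outcome POVM $\{E_j,\id-E_j\}$ to each of $q=O(\log\log d)$ copies of $\rho_t$, postselecting on the fraction of accepts landing near $\Tr(E_j\rho)$, and tracing out all but one register. The search for a violating $E_j$ is done via the Quantum OR lemma followed by a \emph{binary search} over $\{E_1,\ldots,E_m\}$; this is where the extra powers of $\log m$ in the $\log^4 m$ bound actually come from, not from success amplification as you suggest. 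Your update rule (matrix multiplicative weights / online learning with the $O(n/\varepsilon^2)$ mistake bound) is instead the approach the paper describes in the \emph{next} subsection on the max-entropy principle, and your search primitive (the Threshold Search of B\u{a}descu--O'Donnell) is the tool behind the later \emph{improved} $\widetilde O(n\log^2 m/\varepsilon^2)$ bound, not this one.

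Both routes are legitimate: postselected learning gives a more self-contained argument tied to Aaronson's earlier communication-complexity result, while the MMW/online-learning route is cleaner to state and connects naturally to the convex-optimization and max-entropy viewpoints. One small inconsistency in your write-up: if you really invoke Threshold Search at $\widetilde O(\log^2 m/\varepsilon^2)$ per round and multiply by $T=O(n/\varepsilon^2)$, you would get $\widetilde O(n\log^2 m/\varepsilon^4)$, not $\log^4 m$; your jump to $\log^4 m$ via ``amplification contributes the extra logarithmic factors'' does not account for two full powers of $\log m$. To match the theorem as stated you should instead use the OR lemma plus binary search, which is what supplies the $\log^4 m$.
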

We now sketch a proof of this theorem.  For simplicity, we let $\varepsilon$ be a constant, say $1/3$. Aaronson's proof is based on the technique of post selected learning~\cite{aaronson2007learnability} which was introduced in the context of communication complexity. In this communication task, there are two players Alice and Bob: Alice has a $d$-dimensional quantum state $\rho$ (unknown to Bob) and together they know a set of $m$ many operators $\{E_1,\ldots,E_m\}$. The goal is for Alice to send a classical message to Bob, who should output $\Tr(\rho E_1),\ldots,\Tr(\rho E_m)$ up to  error $1/3$.  The same two trivial protocols we mentioned earlier would work here, giving a communication upper bound of $O(m+d^2)$. 
Surprisingly, Aaronson~\cite{aaronson2007learnability} showed that there exists a communication protocol with cost  $\textsf{poly}(\log d,\log m)$ which solves the communication task, whose proof we sketch first.  Bob starts by guessing the state Alice possesses. To this end, he lets $\rho_0=\id/d$, the maximally mixed state, and updates his guess in every round. At the $t$th round, suppose Bob’s guess is $\rho_t$ (whose classical description is known to Alice), Alice communicates to Bob a $j\in [m]$ for which $|\Tr(\rho E_j)-\Tr(\rho_t E_j)|$ is the largest and sends him $b = \Tr(E_j\rho)$. With this, Bob updates $\rho_t\rightarrow\rho_{t+1}$ as follows: let $q = O(\log \log d)$ and
$F_t$ be a two-outcome measurement on $\rho_t^{\otimes q}$ that applies the POVM $\{E_j, \id-E_j\}$ to each of the $q$ copies of $\rho_t$
and accepts if and only if the number of $1$-outcomes was at least $(b - 1/3)q$.  Suppose $\sigma_{t+1}$ is the state obtained by post-selecting on $F_t$ accepting $\rho^{\otimes q}_t$, then $\rho_{t+1}$
is the state obtained by tracing out the last $q-1$ registers of $\sigma_{t+1}$.  Aaronson
showed that after $T = O(\log d)$ rounds, Bob will have $\rho'$ which satisfies $|\Tr(E_i\rho) - \Tr(E_i\rho')| \leq 1/3$ for $i \in [m]$. Returning to shadow tomography,  observe that there is no Alice, and Bob is replaced by a quantum learner. So, at the $t$th stage, without any
assistance, the learner needs to figure out $j\in [m]$ for which $|\Tr(E_j\rho_t)-\Tr(E_j\rho)|$ is
large. To this end, Aaronson  used a variant of the Quantum
OR lemma~\cite{DBLP:conf/soda/HarrowLM17}, which uses $O(\log m)$ copies of $\rho$ and outputs ``yes" if there exists a $j\in [m]$ for which $|\Tr(E_j \rho) - \Tr(E_j\rho)| \geq 2/3$ and outputs
``no" if $|\Tr(E_j \rho) - \Tr(E_j\rho)| \leq 1/3$ for every $j\in [m]$. However, in order to use the ideas from the communication protocol,  in the “yes” instance of the OR lemma, Bob needs to know $j$ (not just the existence of $j$) in order to update $\rho_t$ to $\rho_{t+1}$. Aaronson shows how to do this by using a simple binary search over $\{E_1,\ldots, E_m\}$ to find such a $j$. Putting these ideas together, Aaronson shows the sample complexity upper bound for the shadow tomography.

\subsection{Max-entropy principle and Matrix Multiplicative Weight Update}
Recall that to solve shadow tomography, the goal is to find a quantum state $\sigma$ that satisfies $\tr(\sigma E_i)\approx \tr(\rho E_i)$ for all $i$. Further, one would like to minimize the number of copies of $\rho$, suggesting that $\sigma$ should be no more informative than matching the above expectations. This is an ideal ground to invoke the \emph{max entropy principle}, which states that the quantum state $\sigma$ maximizing $S(\sigma)$ (maximum uncertainty) subject to the constraints $\tr(\sigma E_i)= \tr(\rho E_i)$ is the Gibbs quantum state $\frac{e^{-\sum_i \alpha_i E_i}}{\tr(e^{-\sum_i \alpha_i E_i})}$. Here, $\alpha_i$s are determined by the expectations $\tr(\rho E_i)$. From here, an algorithm for shadow tomography can start with a trivial guess for $\sigma$ - the maximally mixed state - which is then updated as new knowledge from $\rho$ arrives. Since the maximally mixed state is the Gibbs state of the trivial Hamiltonian `$0$', the updates can be done directly to the Hamiltonian. To see how this update can be determined, consider a technical theorem from \cite{FBK21}.
\begin{theorem}
    Consider a Hamiltonian $G$ and an operator $E$ with $\|E\|_\infty\leq 1$. For $\eta\in \R$, consider the Gibbs states $\sigma=\frac{e^{-\beta H}}{\tr(e^{-\beta H})}$ and $\sigma'=\frac{e^{-\beta (H+\eta E)}}{\tr(e^{-\beta (H+\eta E)})}$. It holds that for any quantum state~$\rho$, 
$$
\relent{\rho}{\sigma'} - \relent{\rho}{\sigma}\leq \beta\cdot \eta\Big(\beta\eta e^{|\beta\eta|} + \tr(E(\rho-\sigma))\Big).
$$ 
In particular, setting $\eta= -\frac{\tr(E(\rho-\sigma))}{4\beta}$, we find that  
$$
\relent{\rho}{\sigma'} - \relent{\rho}{\sigma}\leq -\tr(P(\rho-\sigma))^2/8.
$$
\end{theorem}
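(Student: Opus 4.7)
\begin{refproof}
The plan is to expand the difference $\relent{\rho}{\sigma'}-\relent{\rho}{\sigma}$ explicitly using the Gibbs structure, reduce it to a scalar inequality via Golden--Thompson, and then apply a second-order Taylor bound to handle the perturbation.

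First I would use the identity $\relent{\rho}{\tau}=-S(\rho)-\tr(\rho\log\tau)$ together with $\log\sigma=-\beta H-\log Z$ and $\log\sigma'=-\beta(H+\eta E)-\log Z'$, where $Z=\tr(e^{-\beta H})$ and $Z'=\tr(e^{-\beta(H+\eta E)})$. The entropy term $S(\rho)$ cancels, and the linear parts collapse to
\[
\relent{\rho}{\sigma'}-\relent{\rho}{\sigma}=\beta\eta\,\tr(\rho E)+\log(Z'/Z).
\]
So it suffices to upper bound $\log(Z'/Z)$ by something like $-\beta\eta\,\tr(\sigma E)+O((\beta\eta)^2 e^{|\beta\eta|})$; the cross term $\tr(E(\rho-\sigma))$ will then pop out naturally.

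Next, to control $Z'/Z$ despite the noncommutativity of $H$ and $E$, I would invoke the Golden--Thompson inequality:
\[
Z'=\tr\!\bigl(e^{-\beta H-\beta\eta E}\bigr)\le \tr\!\bigl(e^{-\beta H}e^{-\beta\eta E}\bigr)=Z\cdot\tr(\sigma\, e^{-\beta\eta E}).
\]
Now I need a good scalar bound on $e^{-\beta\eta E}$. Using $\|E\|_\infty\le 1$, every eigenvalue $\lambda$ of $E$ lies in $[-1,1]$, and the elementary estimate $e^x\le 1+x+\tfrac{1}{2}x^2 e^{|x|}$ applied pointwise (together with $\lambda^2\le 1$ and $|\lambda|\le 1$) yields the operator inequality
\[
e^{-\beta\eta E}\preceq I-\beta\eta E+\tfrac{1}{2}(\beta\eta)^2 e^{|\beta\eta|}\,I.
\]
Taking trace against $\sigma$ and then applying $\log(1+y)\le y$ gives $\log(Z'/Z)\le -\beta\eta\,\tr(\sigma E)+\tfrac{1}{2}(\beta\eta)^2 e^{|\beta\eta|}$. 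Plugging this back in produces exactly the claimed inequality (absorbing the constant $\tfrac12$ into the bound).

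For the ``in particular'' statement, substitute $\eta=-\tr(E(\rho-\sigma))/(4\beta)$ and let $t=\tr(E(\rho-\sigma))$. Then $\beta\eta=-t/4$, and since $|t|\le \|E\|_\infty\|\rho-\sigma\|_1\le 2$ we have $|\beta\eta|\le 1/2$, whence $e^{|\beta\eta|}\le e^{1/2}<2$. Substituting into the main inequality and simplifying:
\[
\relent{\rho}{\sigma'}-\relent{\rho}{\sigma}\le -\tfrac{t^2}{4}\bigl(1-\tfrac14 e^{|t|/4}\bigr)\le -\tfrac{t^2}{8},
\]
since the parenthesized factor is at least $1-e^{1/2}/4\ge 1/2$. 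The main obstacle in this proof is the step that genuinely uses quantumness, namely the Golden--Thompson inequality; everything else is a calculation with Gibbs states and scalar Taylor expansion. The choice of coefficient $1/(4\beta)$ for $\eta$ is precisely tuned so that the quadratic ``regularization'' term $(\beta\eta)^2 e^{|\beta\eta|}$ is dominated by the linear gain $|\beta\eta\,t|$, leaving a clean quadratic decrease in relative entropy.
\end{refproof}
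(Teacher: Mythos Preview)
Your proposal is correct and follows essentially the same approach as the paper: expand the relative-entropy difference using the Gibbs form, apply Golden--Thompson to bound $Z'/Z$ by $\tr(\sigma e^{-\beta\eta E})$, and then use a second-order Taylor estimate together with $\log(1+y)\le y$. You additionally spell out the verification of the ``in particular'' clause (via $|t|\le 2$ and $e^{1/2}<2$), which the paper omits.
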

\begin{proof}
To prove this result, a direct calculation reveals that $$\relent{\rho}{\sigma'} - \relent{\rho}{\sigma}=\beta\eta\tr(\rho E)+\log\frac{\tr(e^{-\beta (H+\eta E)})}{\tr(e^{-\beta H})}.$$ Using the Golden-Thompson inequality, we find that 
$$\relent{\rho}{\sigma'} - \relent{\rho}{\sigma}\leq\beta\eta\tr(\rho E)+\log\frac{\tr(e^{-\beta H}e^{-\beta\eta E})}{\tr(e^{-\beta H})}=\beta\eta\tr(\rho E)+\log\tr(\sigma e^{-\beta\eta E}).$$
Since $\|E\|_\infty \leq 1$, we can estimate $\tr(\sigma e^{-\beta\eta E})\leq 1-\beta\eta\tr(\sigma E) + \beta^2\eta^2 e^{|\beta\eta|}$, which implies 
$$\relent{\rho}{\sigma'} - \relent{\rho}{\sigma}\leq\beta\eta\tr(\rho P)+\log(1-\beta\eta\tr(\sigma E) + \beta^2\eta^2 e^{|\beta\eta|})\leq \beta\eta\tr((\rho-\sigma) P) + \beta^2\eta^2e^{|\beta\eta|}.
$$
This proves the theorem statement.  
\end{proof}
Thus, the alternate algorithm for shadow tomography proceeds by identifying an $E_i$ that still does not satisfy $\tr(E_i\rho)=\tr(E_i\sigma)$ and then updating the weight of such an $E_i$ in $\sigma$. In order to find such an $E_i$ with $\text{poly}(\log m)$ sample complexity, one can use the `quantum OR lemma' as described earlier. We also highlight that this procedure can be used to learn the Hamiltonian. Assuming that $\rho$ itself is a Gibbs state, we update the weights of the basis operators $E_\alpha$ until the expectation values are close. In such a case, the strong convexity from \cite{anshu2021sample} ensures that the Hamiltonian is learned up to  desired error.

\subsection{Subsequent works building on shadow tomography}
\label{sec:subsequentshadow}

There have been several subsequent works that have built upon Aaronson's shadow tomography protocol which we discuss in this section.

\subsubsection{Classical shadows} 
A subsequent work of Huang, Kueng and Preskill~\cite{huangpreskill} presented an alternate protocol for a restricted version of shadow tomography that is more time efficient than Aaronson's original shadow tomography protocol. In particular, they proved the following.
\begin{theorem}
Let $B>0$ be an integer and $\varepsilon,\delta \in [0,1]$. Given $O(B/\varepsilon^2 \log(1/\delta))$ copies of $\rho$,  there exists a procedure that satisfies the following: for every observable $M$  that satisfies  $\Tr(M^2)\leq B$, with probability $\geq 1-\delta$, the quantity $\Tr(\rho M)$ can be computed to error $\varepsilon$.
\end{theorem}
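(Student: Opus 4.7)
The plan is to construct, from each single copy of $\rho$, an unbiased ``snapshot'' estimator $\hat{\rho}_{\text{shot}}$ whose variance against any observable $M$ with $\Tr(M^2)\leq B$ is $O(B)$, and then boost the success probability from a constant to $1-\delta$ using median-of-means. First, I would describe the randomized measurement primitive: for each copy of $\rho$, draw $U$ uniformly from the $n$-qubit Clifford group, apply $U$, measure in the computational basis to obtain $b\in\01^n$, and record the classical data $(U,b)$. Define the induced measurement channel $\mathcal{M}(\sigma) = \EE_U\sum_b \bra{b}U\sigma U^\dagger\ket{b}\, U^\dagger\ketbra{b}{b}U$ and the snapshot $\hat{\rho}_{\text{shot}} = \mathcal{M}^{-1}\!\bigl(U^\dagger\ketbra{b}{b}U\bigr)$.

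The first technical step is to invert $\mathcal{M}$. Since the Clifford group is a $2$-design, one evaluates the Haar twirl $\EE_U U^{\otimes 2}(\cdot)(U^\dagger)^{\otimes 2}$ via the symmetric-subspace decomposition and obtains $\mathcal{M}(\sigma) = (\sigma + \Tr(\sigma)\id)/(d+1)$ with $d=2^n$, hence $\mathcal{M}^{-1}(X) = (d+1)X - \Tr(X)\id$. Linearity of $\mathcal{M}^{-1}$ gives $\EE[\hat{\rho}_{\text{shot}}]=\rho$, so $\EE\bigl[\Tr(M\hat{\rho}_{\text{shot}})\bigr] = \Tr(M\rho)$ for every $M$. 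A crucial point is that the classical data $(U,b)$ is collected independently of any future $M$, so the same pool of snapshots can be reused.

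The second technical step is the variance bound $\Var\bigl[\Tr(M\hat{\rho}_{\text{shot}})\bigr]\leq C\cdot\Tr(M_0^2)\leq C\cdot B$, where $M_0 = M - \Tr(M)\id/d$ is the traceless part and $C$ is an absolute constant. I would compute $\EE\bigl[\Tr(M\hat{\rho}_{\text{shot}})^2\bigr]$ by expanding the square and using the Clifford group's $3$-design property to evaluate the third-moment average $\EE_U (U^\dagger)^{\otimes 3}(\cdot)U^{\otimes 3}$ via Weingarten/Schur--Weyl calculus on the symmetric subspace. After substantial cancellation, the potentially $d$-large contributions from $\mathcal{M}^{-1}$ collapse to a dimension-independent multiple of $\Tr(M_0^2)$, which is the ``Clifford shadow norm'' bound.

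The final step is median-of-means amplification. Partition the $N$ snapshots into $K = \Theta(\log(1/\delta))$ batches of size $N_0 = \Theta(B/\varepsilon^2)$. The within-batch empirical mean has variance $O(B/N_0) = O(\varepsilon^2)$, so Chebyshev's inequality gives each batch mean within $\varepsilon$ of $\Tr(M\rho)$ with probability $\geq 2/3$. A standard Chernoff argument shows the median of $K$ such batch means is within $\varepsilon$ with probability $\geq 1-\delta$. The total sample complexity is $KN_0 = O\bigl(B\varepsilon^{-2}\log(1/\delta)\bigr)$, as claimed. The main obstacle is clearly the variance bound: the inverse channel $\mathcal{M}^{-1}$ carries a factor of $d+1$, and demonstrating that the $3$-design average tames this into a clean dimension-free $\Tr(M^2)$ scaling is the technical heart of the argument. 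A secondary subtlety is that the bound holds for any \emph{fixed} $M$; a union bound over an $L$-element list of observables costs only an extra $\log L$ factor thanks to the median-of-means construction, which is exactly how shadow estimation achieves exponential savings in the number of observables.
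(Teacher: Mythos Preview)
Your proposal is correct and follows exactly the same approach as the paper's sketch, which itself simply summarizes the original Huang--Kueng--Preskill construction: random Clifford measurements to produce classical shadows $(C_i,b_i)$, followed by median-of-means estimation. You have supplied considerably more technical detail than the paper does---in particular the explicit form of $\mathcal{M}^{-1}$ via the $2$-design property and the variance bound via the $3$-design property---but the underlying strategy is identical.
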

To compare this procedure and shadow tomography, suppose the algorithm needs to estimate $m$ many expectation values, then by letting $\delta\sim 1/m$ with success probability $\geq 2/3$, the overall sample complexity scales as $O((\log m) \cdot B/\varepsilon^2)$. Additionally, observe that the procedure above is \emph{independent} of the observables $M$, unlike Aaronson's protocol~\cite{aaronson:shadow} which used the observables $E_1,\ldots, E_m$ in a crucial way to learn $\Tr(\rho E_i)$. We now give a proof sketch of the theorem: they first give a  polynomial-time procedure for generating \emph{classical shadows} of the unknown quantum state $\rho$ using $T=O(B/\varepsilon^2 \log(1/\delta))$ copies of $\rho$. These classical shadows are generated by running the following procedure: given copies of $\rho$, the algorithm samples a uniformly random Clifford $C$, computes $C\rho C^\dagger$ and measures the state in the computational basis to get an $n$-bit string $b$. So the classical shadows is the set $\{(C_i,b_i\}_{i\in [T]}$. Using these classical shadows,~\cite{huangpreskill} use a simple median of means estimation procedure to estimate $\Tr(\rho M)$ for an arbitrary observable $M$ satisfying~$\Tr(M^2)\leq B$. Thus the sample complexity is $O(B/\varepsilon^2\log(1/\delta))$.

\subsubsection{Improved shadow tomography and agnostic learning} 
 B{\u a}descu and O'Donnell~\cite{DBLP:conf/stoc/BadescuO21} improved the complexity of shadow tomography to $\tilde{O}((n\cdot \log^2 m)/\varepsilon^2)$, which simultaneously obtains the best known dependence on each of the parameters $n,m,\varepsilon$. We do not sketch their protocol, but remark on one interesting corollary of shadow tomography is a procedure which they call \emph{quantum hypothesis selection}. Although not phrased in this language, quantum hypothesis selection can be viewed as \emph{agnostic} learning quantum states. The setup for quantum agnostic learning states is the following: $\Cc$ is a collection of known quantum states $\{\rho_1,\ldots,\rho_m\}$, a learning algorithm is provided with copies of an unknown state $\sigma$ and needs to find $\rho_{k} \in \Cc$ which is closest to $\sigma$ in the following sense: output $\rho_{k}\in \Cc$ such that
 \begin{equation}
 \label{eq:agnostic}
 \|\rho_{k}-\sigma\|_1\leq \alpha \cdot \min_{\rho\in \Cc}  \|\rho-\sigma\|_1+\varepsilon,
 \end{equation}
 for some $\alpha$. We briefly sketch the reduction from quantum agnostic learning  to shadow tomography:  consider the two states $\rho_i,\rho_j$ in the concept class $\Cc$, by Holevo-Helstrom's theorem there exists an optimal measurement measurement $\{A_{ij},\id-A_{ij}\}$ such that $\Tr(A_{ij}\cdot (\rho_i-\rho_j))=\|\rho_i-\rho_j\|_{tr}$. Now perform shadow tomography using $\tilde{O}((n\cdot \log^2 m)/\varepsilon^2)$ copies of  $\sigma$ along with the operators $\{A_{ij}\}_{i,j\in [m]}$ to obtain $\alpha_{ij}$s satisfying  $|\alpha_{ij}-\Tr(A_{ij}\sigma)|\leq \varepsilon/2$. At this 
 point,~\cite{DBLP:conf/stoc/BadescuO21} simply goes over all $\rho \in \Cc$ to find a $\rho_{k}$ that minimizes the quantity $\max_{i,j}|\Tr(\rho_{k}A_{ij})-\alpha_{ij}|$ (this is inspired by  classical hypothesis selection~\cite{yatracos1985rates}). Let $\eta=\min_{\rho\in \Cc}  \|\rho-\sigma\|_1$ and $i^*=\argmin_{\rho\in \Cc}  \|\rho-\sigma\|_1$. Observe that
 \begin{align*}
\|\rho_{k}-\sigma\|_{tr}&\leq \eta+\|\rho_{k}-\rho_{i^*}\|_{tr}\\
&=\eta+|\Tr(A_{i^*k}\rho_k)-\Tr(A_{i^*k}\rho_{i^*})|\\
&\leq \eta+|\Tr(A_{i^*k}\rho_k)-\alpha_{i^*k}|+|\Tr(A_{i^*k}\rho_{i^*})-\alpha_{i^*k}|\leq 3\eta+\varepsilon.
 \end{align*}
 Hence the resulting $\rho_{k}$ satisfies Eq.~\eqref{eq:agnostic} with $\alpha=3$.   As far as we are aware,~\cite{DBLP:conf/stoc/BadescuO21,chung2018sample,fanizza2022learning,caro2021binary} are the only few works to look at agnostic learning of quantum states. 
 These works gives  rise to the following two interesting~questions.

\begin{question}
 What is the sample complexity of quantum agnostic learning if we require $\alpha=1$?\footnote{In classical computational learning theory, reducing the value of $\alpha$ to $1$ has been resolved for certain Boolean function classes in the seminal works~\cite{DBLP:journals/jacm/LinialMN93,DBLP:conf/stoc/GopalanKK08}.}   \end{question}

\begin{question}
What classes of states can be agnostic learned time-efficiently? Can we learn stabilizer states efficiently in the quantum agnostic model?
 
\end{question}
 
\subsubsection{Shadow tomography with separable measurements} 
Chen et al.~\cite{DBLP:conf/focs/ChenCH021} considered the problem of shadow tomography if one was only allowed \emph{separable} measurements. In this setting, they showed that $\tilde{\Omega}(\min\{m,d\})$ many copies are necessary for shadow tomography,  matching the upper bound of Huang et al.~\cite{huangpreskill} of $\tilde{O}(\min\{m,d\})$. They in fact show that, in order to estimate the 
expectation values of all $4^n$ many $n$-qubit Pauli observables, one needs $\Omega(2^n)$ copies of $\rho$ (given access to only separable measurements). The proof of this lower bound follows the following three-step approach $(i)$ 
They first consider the learning tree framework that we discussed below Theorem~\ref{thm:tomographylowerbound}, wherein there is a tree with each node corresponding to a measurement applied to the unknown state $\rho$ and the leaves of the tree correspond to the $m$ many expectation values.
$(ii)$ Using this learning tree technique, the main technical lemma they show is that, in order to prove the hardness of estimating $\Tr(\rho Q_i)$ for arbitrary $Q_i$, using separable measurements, it suffices to upper bound $\delta(Q_1,\ldots,Q_{2^n})=\frac{1}{m}\sup_{\ket{\psi}}\sum_i \langle \psi|Q_i|\psi\rangle^2$. $(iii)$ Finally they show that for the Paulis $P_i$, we have that $\delta(P_1,\ldots,P_{2^n})$ is exactly $1/({2^n}+1)$, which immediately gives them their sample complexity lower bound of $\Omega(2^n)$. 

Additionally they also consider settings wherein the learning algorithm is adaptive (i.e., the learner can perform measurements based out of previous measurement outcomes) and non-adaptive (i.e., the learning algorithm needs to decide at the beginning a sequence of measurements to carry out). Similarly, a followup work of Gong and Aaronson~\cite{gong2022learning} showed how to perform shadow tomography when given $m$ many $k$ outcome measurements using $\poly(k,\log m,n,1/\varepsilon)$ copies of $\rho$.

\subsection{Equivalence between quantum learning models}

So far, we saw many many seemingly (unrelated) models of computation aimed at learning an unknown quantum state such as, shadow tomography, $\PAC$ learning, communication complexity, online learning.  Aaronson and Rothblum~\cite{aaronson2019gentle} also considered differential privacy in learning quantum states and used this notion to prove new bounds on online learning and shadow tomography.\footnote{Classically differential privacy was formalized in the seminal works by Dwork~\cite{dwork2006differential,dwork2014algorithmic}: we say a learning algorithm is differentially private if it behaves approximately the same when given two training datasets which differ in only limited number of entries.} A natural question is, is there a connection between these models? In~\cite{DBLP:conf/nips/QuekAS21} they showed ``equivalences" between all these models of computation.  A high-level overview of the results in their work is summarized in the figure below. For technical reasons, we do not discuss pure and approximate $\DP$ in detail,  we simple remark that \emph{pure} $\DP$ is a stronger requirement than \emph{approximate} $\DP$ and refer the reader to~\cite{DBLP:conf/nips/QuekAS21} for more details. In particular, these equivalences imply that algorithms in one framework gives rise to quantum learning algorithms in other frameworks.
 We remark that only a few of these arrows are efficient in both sample and time complexity, otherwise these implications are primarily information-theoretic. 

\begin{figure}[!ht]
\centering
\begin{tikzpicture}
[->,>=stealth',shorten >=1pt,auto,  thick,yscale=0.8,
main node/.style={circle,draw}, node distance = 0.8cm and 1.8cm,
block/.style   ={rectangle, draw, text width=5em, text centered, rounded corners, minimum height=2.5em, fill=white, align=center, font={\footnotesize}, inner sep=5pt}]
    \node[main node,block] (Roneway) at (3,2) {\textsf{Pure\\ DP $\PAC$}};
   \node[main node,block] (sfat) at (6,0.5) {\textsf{Sequential fat-shattering}};
  \node[main node,block] (online) at (3,-1.2) {\textsf{Online learning}};
  \node[main node,block] (stability) at (-0.5,-1.2) {\textsf{Stability}};
  \node[main node,block] (Appdim) at (-4,-1.2) {\textsf{Approximate DP $\PAC$}};
    \node[main node,block] (shadow) at (9,-1.2) {\textsf{Shadow tomography}};
    \path [->](Roneway) edge node {} (sfat);
    \path [->](sfat) edge node {\emph{1}} (online);
    \path [->](sfat) edge node {} (shadow);
    \path [->](online) edge node {} (shadow);
    \path [->](online) edge node {\emph{2}} (stability);
    \path [->](stability) edge node {\emph{3}} (Appdim);
\end{tikzpicture}
\end{figure}
Although  {\em a priori}, it seems that $\PAC$ learning, online learning and $\DP$ learning have little to do with one another, classically there have been a sequence of works establishing tight connections between these three fields~\cite{DBLP:journals/siamcomp/KasiviswanathanLNRS11}. The main center piece in establishing these connections is the notion of stability which was introduced in a recent breakthrough work of Bun et al.~\cite{DBLP:conf/focs/BunLM20}. In~\cite{DBLP:conf/nips/QuekAS21} they ``quantize" these connections. Below we give a sketch of their proofs and refer to their work for a detailed overview. 

It is well-known classically that if there is a $\DP$ $\PAC$ learning algorithm for a class $\calC$ then the \emph{representation dimension} of the class is small. Representation dimension then upper-bounds classical communication complexity and $\sfat(\calC)$. In~\cite{DBLP:conf/nips/QuekAS21} they show that this connection carries over in a simple way to the quantum setting.

 \emph{\textbf{(1)}}: Let $\calC$ be a concept class of states with  finite $\sfat(\calC)$. In order to describe an online-learner for $\calC$ making at most $\sfat(\calC)$ mistakes, in~\cite{DBLP:conf/nips/QuekAS21} they construct a \emph{robust} standard optimal algorithm (denoted $\RSOA$) whose accuracy guarantees are robust to adversarial imprecision in the training feedback. The $\RSOA$ algorithm is inspired by the classical standard optimal algorithm for Boolean functions (however in the quantum setting it needs to work for real functions as well as with adversarial noise).  Aaronson et  al.~\cite{aaronson2018online} showed an upper bound of $\sfat(\calC)\leq n$ on the number of mistakes in this setting asking if there is an \emph{explicit} algorithm that achieves this bound (their $\RSOA$ made explicit this algorithm). 

\emph{\textbf{(2)}}: Here, they show that a concept class $\calC$ with $\sfat(\calC)=d$  can be learned by a stable algorithm. To prove this, they follow the technique of~\cite{DBLP:conf/focs/BunLM20} which feeds a standard optimal algorithm (which they replace with $\RSOA$) with a specially-tailored input sample. The tailoring algorithm deliberately injects ``mistake" examples into the sample, each of will force a prediction mistake in $\RSOA$. Since the $\RSOA$ {\em completely} identifies the target concept after making at most $d$ prediction mistakes, the injection step allows the stable algorithm to output the correct hypothesis. In~\cite{DBLP:conf/nips/QuekAS21}, their quantum-focused adaptation of this technique handles the twin challenges of accurately engineering the mistake examples for real-valued functions, and having $\varepsilon$-uncertainty in the adversary's feedback (both of which are not present in the Boolean setting).

  \emph{\textbf{(3)}}:  Now that one has a stable algorithm established above, one need to make it differentially private. In the Boolean setting, given a stable algorithm $\mathcal{A}$, there is a well-known ``Stable Histograms" algorithm may be used a `wrapper' around $\mathcal{A}$, to privately identify $\mathcal{A}$'s high-probability output functions. This involves running $\mathcal{A}$ many times and outputting its most frequent output, while adding Laplacian noise to make it $\DP$. However, they encounter an additional complication in the quantum setting: outputting the ``most frequent" quantum state doesn't make sense, since two quantum states could be arbitrarily close and qualify as valid outputs. Addressing this,~\cite{DBLP:conf/nips/QuekAS21} modify Stable Histograms and show that it can be used to make the quantum stable learning algorithm $\DP$.



\section{Learning classical functions through quantum encoding}
\label{sec:classicalfunctionlearning}
\subsection{Learning Boolean functions}
The \emph{quantum} $\PAC$ model for learning a concept class of Boolean functions $\Cc\subseteq \{c:\01^n\rightarrow \01\}$ was introduced by Bshouty and Jackson~\cite{DBLP:conf/colt/BshoutyJ95}. In this model, instead of access to labelled examples $(x,c(x))$ where $x$ is sampled from $D$, the quantum learning algorithm is provided with copies of the \emph{quantum example} state $\ket{\psi_c}=\sum_{x\in \01^n}\sqrt{D(x)}\ket{x,c(x)}$. Quantum examples are a natural generalization of classical labelled examples (by measuring a single quantum example, we obtain a classical labelled example). A quantum $\PAC$ learner is given  copies of the quantum example state,  performs a POVM (where each outcome of the POVM is associated with an hypothesis) and outputs the resulting hypothesis.  The sample complexity of the learner here is measured as the number of \emph{copies} of $\ket{\psi_c}$ and the $(\varepsilon,\delta)$-quantum sample complexity of learning $\Cc$ is defined similarly to the classical $\PAC$ learning.  There have been a few works that have looked at quantum $\PAC$ learning function classes~\cite{DBLP:conf/colt/BshoutyJ95,atici2007quantum,arunachalam2021two,arunachalam2018optimal} and showed some strengths and weakness of quantum examples in the $\PAC$ model of learning: under the distribution independent setting, we know that quantum examples are not useful for learning~\cite{DBLP:journals/jmlr/ArunachalamW18},   for uniform and product distributions we know quantum examples are useful~\cite{atici2007quantum,arunachalam2021two,DBLP:conf/colt/BshoutyJ95,kanade2018learning},\footnote{We remark that almost all known quantum speedups are based on a version of quantum Fourier  sampling.} for the uniform distribution we know they are not useful for learning circuit families~\cite{arunachalam2022quantum} and for certain applications such as learning parities with noise, quantum examples are known to be useful~\cite{Grilo2019learning}.  For further details, we refer the reader~to~\cite{DBLP:journals/sigact/ArunachalamW17}. 
\begin{question}
   Almost all quantum learning speedups are in the uniform distribution setting, is there a quantum learning speedup in the distribution-independent model in terms of sample or time~complexity?
\end{question}

%

In~\cite{DBLP:journals/jmlr/ArunachalamW18} they also considered two other models of learning (motivated by classical computational learning theory): $(i)$ \emph{random classification noise learning}: here, the learner is given copies of $\sum_x \sqrt{D(x)} \ket{x} \otimes (\sqrt{1-\eta}\ket{c(x)}+\sqrt{\eta}\ket{\overline{ c}(x)}$ and the goal of the learning algorithm is the same as the $\PAC$ learner, $(ii)$ \emph{agnostic learning}: here $D:\01^{n+1}\rightarrow [0,1]$ is an unknown distribution,  the learner  is given copies of $\sum_{(x,b)\in \01^{n+1}} \sqrt{D(x,b)} \ket{x,b}$ and needs to find the concept $c\in \Cc$ that best approximates $D$, i.e., output $c$ that satisfies $\err_D(c)\leq \min_{c'\in \Cc} \{\err_D(c')\}+\varepsilon$, where $\err_D(c')=\Pr_{(x,b)\sim D}[c'(x)\neq b]$. In both these distribution-independent learning models,~\cite{DBLP:journals/jmlr/ArunachalamW18} showed that quantum sample complexity of learning is equal to classical sample complexity of learning up to  constant factors. A natural question is, what \emph{can} be learned in polynomial time in these models? As far as we are aware, only parities are known to be learnable in the classification noise model~\cite{Grilo2019learning,caro2020quantum} when $D=\01^n$ and agnostic learning interesting concept classes has not received any attention in~literature.
\begin{question}
      Can we learn DNF formulas in the quantum agnostic model in  polynomial time?\footnote{A positive answer to this question would imply a polynomial-time quantum algorithm  for $\PAC$ learning depth-$3$ circuits in the uniform distribution model~\cite{feldman2009distribution}.} 
\end{question}

\subsection{Statistical query model}
\label{sec:sqlearning}
The quantum statistical query model was introduced in~\cite{arunachalam2020quantum}, inspired by the classical statistical query model introduced by Kearns~\cite{kearns:statistical}. Classically, it is well-known that \emph{many} algorithms used in practice can be implemented using a statistical query oracle, for example, expectation maximization, simulated annealing, gradient descent, support vector machine, Markov chain Monte carlo methods, principal component analysis, convex optimization (see~\cite{reyzin2020statistical,feldman2017statistical} for these applications).  We first discuss the classical SQ model for learning an unknown concept $c$ from the concept class  $\Cc\subseteq \{c:\01^n\rightarrow \pmset{} \}$  under an unknown distribution $D:\01^n\rightarrow [0,1]$. The SQ learner  has access to a \emph{statistical query oracle} which takes as input two quantities:  \emph{tolerance} $\tau \geq 0$, a function $\phi:\01^n\times \pmset{}\rightarrow \pmset{}$ and returns $\alpha\in \R$ satisfying 
	$ 
	\Big |\alpha - \Exp_{x\sim D}[\phi(x,c(x))] \Big | \leq \tau\;.
	$
	The SQ learning algorithm adaptively chooses a sequence $\{ (\phi_i,\tau_i) \}$, and based on the responses of the statistical oracle $\{\alpha_i\}_i$,  it outputs an hypothesis $h:\01^n\rightarrow \pmset{}$ that approximates $c$, similar to the setting of $\PAC$ learning. 

The $\mathsf{QSQ}$ model is similar to the quantum $\PAC$ model, except that the learning algorithm isn't allowed entangled measurements on several copies of the quantum example state. More formally, let $\Cc\subseteq \{c:\01^n\rightarrow \pmset{}\}$ be a concept class,  $D:\01^n\rightarrow [0,1]$ be a distribution and let $\ket{\psi_c}=\sum_{x}\sqrt{D(x)}\ket{x,c(x)}$. In the $\mathsf{QSQ}$ model, a learning algorithm specifies an operator  $M$ satisfying $\|M\|\leq 1$,  tolerance $\tau \in [0,1]$ and obtains a number $\beta\in [\langle \psi_c| M |\psi_c\rangle-\tau ,\langle \psi_c| M |\psi_c\rangle+\tau]$. An intuitive way to think about the $\mathsf{QSQ}$ model is, a learning algorithm can specify a two-outcome measurement $\{M,\id-M\}$ and obtains a $\tau$-approximation of the probability of this measurement accepting $\ket{\psi_c}$. Ideally, one would want a $\mathsf{QSQ}$ algorithm for which $\tau=1/\poly(n)$ and $M$ can be implemented using $\poly(n)$ gates. A $\mathsf{QSQ}$ algorithm is amenable to near-term implementation since unlike the quantum $\PAC$ framework, it works only by making single copy measurements on the quantum example state $\ket{\psi_c}$. Surprisingly, in~\cite{arunachalam2020quantum}, they show that positive results for quantum $\PAC$ learning that we discussed in the previous section (such as learning parities, DNF formulas, $(\log n)$ juntas) can actually be implemented in the $\mathsf{QSQ}$ framework.
\begin{theorem}
\label{thm:everythingsq}
The concept classes consisting of parities, juntas, DNF formulas, sparse functions, can be learned under the uniform distribution in the $\mathsf{QSQ}$ model.
\end{theorem}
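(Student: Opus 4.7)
The plan is to observe that the core Fourier-sampling primitive underlying essentially every quantum $\PAC$ algorithm under the uniform distribution can be simulated inside the $\mathsf{QSQ}$ framework. Starting from a uniform quantum example $|\psi_c\rangle = 2^{-n/2}\sum_x |x\rangle|c(x)\rangle$, applying $U := H^{\otimes n}\otimes H$ produces a state with amplitude $\hat{c}(s)/\sqrt{2}$ on $|s\rangle|1\rangle$ (viewing $c$ as $\pm 1$-valued) and amplitude $1/\sqrt{2}$ on $|0^n\rangle|0\rangle$. Consequently, for any subset $S\subseteq\{0,1\}^n$ whose indicator projector $\Pi_S$ admits an efficient quantum implementation, the observable $M_S := U^\dagger(\Pi_S\otimes |1\rangle\langle 1|)\,U$ satisfies $\|M_S\|\le 1$ and
\[
\langle\psi_c|M_S|\psi_c\rangle \;=\; \tfrac{1}{2}\sum_{s\in S}\hat{c}(s)^2.
\]
So a $\mathsf{QSQ}$ call with tolerance $\tau$ returns the total Fourier mass on $S$ to additive error $2\tau$, and the four classes can be handled by choosing $S$ and post-processing appropriately.

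For parities $c=\chi_a$ we have $\hat{c}(s)^2 = \delta_{s,a}$, so with $S_i := \{s:s_i = 1\}$, implemented by a single-qubit $|1\rangle\langle 1|$ on qubit $i$, we get $\langle\psi_c|M_{S_i}|\psi_c\rangle = a_i/2$; one query per coordinate with tolerance $\tau<1/4$ recovers $a$ exactly. For a $k$-junta $f$ with relevant set $T$ of size $k=O(\log n)$, the quantity $\langle\psi_f|M_{S_i}|\psi_f\rangle$ vanishes iff $i\notin T$, so a standard threshold argument identifies $T$ with tolerance $\tau$ smaller than half the minimum influence; one then estimates each of the $2^k = \poly(n)$ Fourier coefficients $\hat{f}(s)$ for $s\subseteq T$ via the singleton observable $M_{\{s\}}$ and reconstructs the truth table of the junta.

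For sparse functions one lifts the Kushilevitz--Mansour tree search to $\mathsf{QSQ}$: at a node indexed by a prefix $\alpha\in\{0,1\}^j$ the algorithm needs $\sum_{s:\,s_{1..j}=\alpha}\hat{c}(s)^2$, which is exactly $2\langle\psi_c|M_{S_\alpha}|\psi_c\rangle$ for the prefix projector $S_\alpha$, implementable in $O(j)$ gates; thresholding at each level keeps the search tree of polynomial size, and the recovered large coefficients reconstruct $c$. For DNF formulas we follow Jackson's Harmonic Sieve, combining Freund's boosting with a Fourier-based weak learner: the weak learner must estimate correlations $\mathop{\mathbb{E}}_x[D_t(x)\,c(x)\,\chi_s(x)]$ for the boosting distribution $D_t$, explicitly maintained by the learner. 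Since this correlation equals, up to known scaling, a single Fourier coefficient of $D_t\cdot c$, it is the expectation on $|\psi_c\rangle$ of an observable of the form $U^\dagger N_{D_t,s} U$, where $N_{D_t,s}$ is a diagonal operator computing the relevant phase. The usual Bshouty--Jackson analysis then shows that $\poly(n,s,1/\varepsilon)$ rounds of such $\mathsf{QSQ}$ calls produce a $\poly(n,s)$-size hypothesis for any $s$-term DNF.

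The main obstacle is translating multi-copy or projective-measurement primitives (such as sampling a Fourier coordinate $s$ with probability $\hat{c}(s)^2$) into single-copy expectation queries, since $\mathsf{QSQ}$ forbids entangled measurements across copies of $|\psi_c\rangle$. The observable construction $M_S$ above is exactly the required workaround: instead of sampling $s$, the oracle returns the expected indicator weight on any efficiently-describable set $S$, which is enough for the bit-by-bit recovery of parities and juntas, the prefix-refinement of Kushilevitz--Mansour, and the correlation subroutine of the Harmonic Sieve. A secondary but routine issue is implementing the diagonal operator encoding the boosting distribution $D_t$ at query time; this succumbs to standard reversible-arithmetic compilation because $D_t$ is maintained as an explicit combination of the weak hypotheses found in previous rounds.
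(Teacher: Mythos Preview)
Your proposal is correct and follows essentially the same approach as the paper: the central observation in both is that the observable $M_S = U^\dagger(\Pi_S\otimes|1\rangle\langle 1|)U$ with $U=H^{\otimes(n+1)}$ returns the Fourier mass $\sum_{s\in S}\hat c(s)^2$ (up to the factor $1/2$ you correctly track) via a single $\mathsf{QSQ}$ call, and then the known Fourier-sampling learners for each class are instantiated through this primitive. You spell out the per-class reductions (bitwise recovery for parities, influence thresholding for juntas, Kushilevitz--Mansour prefix search for sparse functions, Harmonic Sieve for DNF) in more detail than the paper's sketch, but the underlying construction and argument are the same.
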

The crucial (and simple) observation in order to see this theorem is that computing the Fourier mass of a subset can be done in the $\mathsf{QSQ}$ model. Given that learning algorithms for parities, juntas, DNF formulas, sparse functions are via Fourier sampling~\cite{DBLP:journals/sigact/ArunachalamW17}, this observation implies the theorem. To see the observation, let $M=\sum_{S\in T}\ketbra{S}{S}$ and consider the observable
		$$
    M'=\mathsf{H}^{\otimes (n+1)} \cdot \Big( \id^{\otimes n}\otimes \ketbra{1}{1} \Big) \cdot M\cdot \Big( \id^{\otimes n}\otimes \ketbra{1}{1}\Big) \cdot \mathsf{H}^{\otimes (n+1)}.
		$$ 
    Operationally, $M'$ corresponds to first applying the
    Fourier transform on $\ket{\psi_f}$, {post-selecting on the last qubit being $1$} and finally applying $M$ to the first $n$ qubits.
 In order to see the action of $M'$ on~$\ket{\psi_f}$, first observe that $\mathsf{H}^{\otimes (n+1)}\ket{\psi_f}$ yields		
		$
		\frac{1}{\sqrt{2^{n}}}\sum_x\ket{x,f(x)}\rightarrow \frac{1}{2^n}\sum_{x,y}\sum_{b\in \01} (-1)^{x\cdot y+b\cdot f(x)}\ket{y,b}.
		$
        Conditioned on the $(n+1)$-th qubit being $1$, we have that the resulting quantum state is $\ket{\psi'_f}=\sum_Q \widehat{f}(Q)\ket{Q}$. The expectation value of $M$ with respect to the resulting state is given by
		$
		\langle\psi'_f\vert M\vert \psi'_f\rangle=\sum_{S\in T}\widehat{f}(S)^2.
		$
Therefore, one quantum statistical query with measurement $M'$,  tolerance $\tau$ produces a $\tau$-approximation of $\sum_{S\in T}\widehat{f}(S)^2$. In~\cite{arunachalam2020quantum}, they use this observation to prove Theorem~\ref{thm:everythingsq}. We pose the following question, which would serve as a tool to understand the fundamental question ``is entanglement needed for quantum learning Boolean functions?".\footnote{For learning the general class of quantum states, the recent work of Chen et al.~\cite{DBLP:conf/focs/ChenCH021} showed entanglement is needed for learning quantum states.}
\begin{question}
Is there a concept class separating $\mathsf{QSQ}$ and quantum  $\PAC$~learning with separable~measurements?
\end{question}
More recently, there have been few works that considered the ``diagonal-$\mathsf{QSQ}$" framework: here, the $\mathsf{QSQ}$ learner can only specify a \emph{diagonal} measurement operator $M$, i.e., the $\mathsf{QSQ}$ learner specifies a $\phi(x)\in [-1,1]$ and makes a $\mathsf{QSQ}$ query with $M=\sum_x \phi(x)\ketbra{x}{x}$ for the unknown state~$\ket{\phi}$. Recently~\cite{hinsche2021learnability,hinsche2022single,sweke23} looked at learning unknown circuits $U$ given diagonal-$\mathsf{QSQ}$ access to $\ket{\psi_U}=U\ket{0^n}$ (these learning algorithms allow to learn the output distributions $\{\langle x|U|0^n\rangle^2\}_x$ of unknown quantum circuits  $U$ in the computational basis).   In particular,~\cite{hinsche2022single} showed that distributions induced by Clifford circuits can be learned in the $\mathsf{QSQ}$ framework, however, if we add a single $T$ gate, then classical SQ learning the output distribution is as hard as learning parities with noise. Subsequent works~\cite{hinsche2021learnability,sweke23} looked at larger circuit families showing stronger lower bounds. One interesting question left open by their work is the following
\begin{question}
    What is $\mathsf{QSQ}$ complexity of learning output distributions of constant-depth circuits in the diagonal-$\mathsf{QSQ}$ framework? 
\end{question}
In another direction Du et al.~\cite{du2021learnability} showed that the $\mathsf{QSQ}$ model can be effectively simulated by noisy  quantum neural networks ($\textsf{QNN}$). Since we saw above that the $\mathsf{QSQ}$ model can learn certain concept classes  in polynomial time, their result suggests that $\textsf{QNN}$s implemented on a noisy device could potentially retain the quantum~speed-up.

\subsection{Kernel Methods}
So far we discussed a family of quantum algorithms that implicitly assumed the learning algorithm could learn a classical function by given access to quantum examples that encode classical information. Furthermore, these quantum examples use a number of qubits that is only logarithmic in the size of the unknown function. In this framework there have been several quantum  machine learning algorithms that are able to achieve polynomial or even exponential speed-ups over classical  approaches~\cite{Harrow2009quantum,wiebe2012quantum,lloyd2013quantum,Lloyd2014,Rebentrost2014quantum,lloyd2014quantumtda,Cong_2016,kerenidis2016quantum,Brandao2019sdp,Rebentrost2018svd,Zhao2019quantum}. However, it is not known whether data can be efficiently provided this way in practically relevant settings. This raises the question of whether the advantage comes from the quantum algorithm, or from the way data is provided~\cite{aaronson2015read}. Indeed, recent works have shown that if classical algorithms have an analogous sampling access to data, then some of the proposed exponential speed-ups do no longer exist~\cite{Tang2019quantuminspired,tang2018quantuminspired,gilyn2018quantuminspired,chia2018quantuminspired,ding2019quantuminspired,Chia2020samplingbased}. 

A natural question is, if we demand classical input and classical output, but let the intermediate operation be a quantum operations, can one hope for a quantum speedup? To this end, a powerful technique called the \emph{quantum kernel method} was introduced~\cite{Havlicek2019,Schuld2019quantum}. These papers proposed obtaining a quantum speedup via the use of a \emph{quantum-enhanced feature space}, where each data point is mapped non-linearly to a quantum state and then classified by a linear classifier in the high-dimensional Hilbert space.  The advantage of the quantum learner stems from its ability to recognize classically intractable complex patterns using the quantum feature map, which maps each classical data point non-linearly through a parameterized family of unitary circuits to a quantum state, $x\mapsto \ket{\phi(x)}=U(x)\ket{0^n}$, in both training and testing. The learning  algorithm proceeds by finding the optimal separating hyperplane for the training data in the high-dimensional feature space. To do so efficiently, they use the standard kernel method in \emph{support vector machines} (SVMs), a well-known family of supervised classification algorithms~\cite{vapnik2013nature}. More specifically, their algorithm only uses the quantum computer to estimate a kernel function and then implement a conventional SVM on a classical computer. In general, kernel functions are constructed from the inner products of the feature vectors for each pair of data points, which can be estimated as the transition amplitude of a quantum circuit as $\left|\braketIP{\phi(x_j)}{\phi(x_i)}\right|^2=|\langle 0^n|U^\dag (x_j)U(x_i)|0^n \rangle |^2$ (see Figure~\ref{fig:kernel} for the quantum circuit implementation of this). One can therefore estimate each kernel entry up to  a small additive error using the quantum computer -- a procedure that is referred to as \emph{quantum kernel estimation} ($QKE$). Then, the kernel matrix is given to a classical optimizer that efficiently finds the linear classifier that optimally separates the training data in feature space by running a convex quadratic program.  

\begin{figure}[t]
    \centering
    \includegraphics[width=0.5\linewidth]{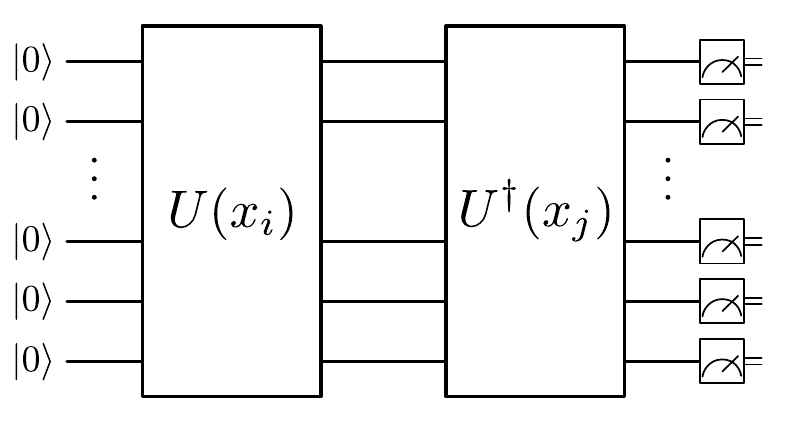}
    \caption{Quantum kernel estimation. A quantum feature map $x\mapsto \Phi(x) = \ketbra{\phi(x)}{\phi(x)}$ is represented by a circuit where $\ket{\phi(x)}=U(x)\ket{0^n}$. Each kernel entry $K(x_i,x_j)$ is obtained using a quantum computer by running the circuit $U^\dag (x_j)U(x_i)$ on input $\ket{0^n}$, and then estimating $\left|\langle 0^n|{U^\dag (x_j)U(x_i)}|{0^n}\rangle \right|^2$ by counting the frequency of the $0^n$ output.}
    \label{fig:kernel}
\end{figure}

Despite the popularity of these quantum kernel methods, it was unclear if, quantum algorithms using kernel methods could provide  a provable advantage over classical machine learning algorithms. There have been several proposal for interesting feature maps~\cite{glick2021covariant,Havlicek2019} based on group covariant maps, but their utility and implementability is unclear.  In~\cite{liu2021rigorous}, they constructed a classification task based on the discrete logarithm problem and showed that given classical access to data, quantum kernel methods can provably solve a classically intractable learning problem, even in the presence of finite sampling noise. While the particular problem they consider does require a fault-tolerant quantum computer, the learning algorithm is general in nature which suggests potential to find near-term implementable problems and is suitable for error-mitigation techniques. Their result can be viewed as one of the first formal evidence of quantum advantage using \emph{quantum kernel methods}, a widely-studied family of quantum learning algorithms that can be applied to a wide range of problems.

\begin{question}
Can quantum kernel methods give an unconditional polynomial advantage over classical learning for a natural problem?
\end{question}

\section{Perspective on other works}
\label{sec:prospective}


The theory of quantum information and computation lies at the intersection of computer science and physics. Quantum learning theory has evolved in the same spirit, addressing questions that are native to both bodies of knowledge. The field is inspired, on one hand, from the notions of $\PAC$ learning and statistical query learning from theoretical computer science and on the other hand, from the experimental goal of learning physics of a system from natural quantum states. This survey adopts the view that the most exciting questions in the field lie precisely at this intersection.

The success of learning theory lies in its adaptation of the `number of samples' as a natural complexity measure - which is well motivated from the point of view of practical machine learning. It is remarkable that we can obtain sample efficient algorithms - in many cases even time efficient - for a wide class of learning problems. These successful results are accompanied by new insights into the structure of corresponding families of quantum states, such as phase states, Gibbs quantum states and Matrix product states. Here, we list several notable works related to learning quantum states that haven't been covered in this survey. These include results on learning quantum noise~\cite{flammia2012quantum} in quantum experiments, tomography of quantum channels~\cite{huang2022learning,caro2022learning,chung2018sample,caro2021binary,fanizza2022learning,caro2020pseudo,haah2023query}, learning properties of ground states of the Hamiltonians~\cite{lewis2023improved,onorati2023efficient,huang2022provably,rouze2021learning}, provable bounds for learning parametrized quantum circuits, quantum dynamics, simulation~\cite{caro2021encoding,caro2022generalization,gibbs2022dynamical,caro2022out}, learning matrix product states~\cite{cramer2010efficient,GSG2023,khavari2021lower}, investigation into quantum Born machines~\cite{CMDK20, ZGYN22, GYN22}, learning Hamiltonians in a heuristic manner~\cite{wiebe2014hamiltonian,wiebe2012quantum,verdon2019quantum}, power of quantum neural networks~\cite{abbas2021power,beer2020training}, learning unitaries defined by time evolution - $e^{-iHt}$, where several recent works \cite{WiebeGFC14,haah2021optimal,HTFS22,dutt2021active} have given efficient algorithms. We refer the interested reader to the references for more details. 

\paragraph{Sample and time complexity beyond learnability.} 
Finally, we highlight that - beyond learnability - the notion of sample complexity is well motivated even in quantum information problems that are not canonical learning tasks. We discuss a few directions and questions here.

\emph{\textbf{1.}  Sample complexity as a measure in quantum communication.} In the standard quantum communication complexity setting, Alice and Bob compute a classical function with classical inputs, using quantum resources. One can also define a model where inputs are quantum and functions of quantum inputs are to be computed. An example of this is: Alice's input is a quantum state $\ket{\psi}$, Bob's input is a quantum state $\ket{\phi}$, and they wish to estimate $\bra{\psi} M \ket{\phi}$ for a given $M$. A single copy of each input is insufficient and unbounded number of inputs render the problem classical. An interesting intermediate regime is to allow several independent copies of inputs and minimize the sample complexity. The work \cite{ALL22} first considered this for $M=\id$ and showed exponential separation in sample complexity between classically communicating Alice-Bob and quantumly communicating Alice-Bob.
\begin{question}
What is the sample complexity of evaluating $\bra{\psi} M \ket{\phi}$ when Alice and Bob are only allowed classical communication, and how does it relate to the sample complexity when quantum communication is allowed? 
\end{question}

\emph{\textbf{2.} Sample complexity as a measure in the Local Hamiltonian problem:} A canonical Quantum Merlin-Arthur complete problem is the Local Hamiltonian Problem, with the goal of determining if the ground energy of a $n$-qubit local Hamiltonian is small or large. The proof - that certifies that the ground energy is small - is a quantum state, and there is some evidence that the proofs cannot be polynomial sized classical strings. In fact, the famous result of Marriot and Watrous shows that one copy of the witness suffices \cite{MW05}. Now, let's restrict the proof to be a simple quantum state, such as a state that can be prepared by a low-depth circuit or a stabilizer state. We can find local Hamiltonians whose ground states have very small overlap with one such state \cite{AN22}. Thus, many copies of the simple witness would be needed to eventually reach a complex witness of the ground state (via phase estimation algorithm). But it is not clear if such a simple witness could be useful in other ways to estimate the ground energy.
\begin{question}
Can we provide a sample complexity lower bound for interesting class of simple witness states, when the goal is to use them to estimate the ground energy of a Hamiltonian? Is this problem easier if the Hamiltonian itself is a sparse Hamiltonian with oracle access?
\end{question}

\emph{\textbf{3.} Time-efficient learning coset states.} One way to view the Hidden subgroup problem ($\mathsf{HSP}$) is in terms of sample complexity of learning the coset state. In the $\mathsf{HSP}$, there is a group $G$. Let $\Hi(G)$ be the set of all subgroups $H\leq G$ of $G$. We say a function $f_H:G\rightarrow S$ hides a subgroup $H$ if $f(x_1)=f(x_2)$ for all $x_1,x_2\in H$ and is distinct for different cosets. Given quantum query access to $f$, the goal is to learn $H$. The so-called \emph{standard approach} (which has been the focus of almost all known $\mathsf{HSP}$ algorithms) is the following: prepare $\frac{1}{\sqrt{|G|}}\sum_{x\in G}\ket{x}$, query $f$ to produce $\frac{1}{\sqrt{|G|}}\sum_{x\in G}\ket{x,f(x)}$ and discard the second register to obtain the state $\rho_H=\frac{|H|}{|G|}\sum_{g\in K}\ketbra{gH}{gH}$ where $\ket{gH}=\frac{1}{\sqrt{|H|}}\sum_{h\in H}\ket{gh}$ and $K$ is a complete set of left coset representatives of the subgroup $H\leq G$. The state $\rho_H$ is called the \emph{coset state} and the question is: what is the sample complexity and time complexity of learning $H$ given copies of $\rho_H$? A well-known result~\cite{DBLP:journals/ipl/EttingerHK04} shows that the sample complexity of learning $H$ is $O(\log^2 |G|)$. However, time-efficient learning $H$ for arbitrary groups has remained a long-standing open question. We know time efficient implements for special groups~\cite{DBLP:journals/cjtcs/BaconCD06,hallgren2000normal,DBLP:journals/eccc/ECCC-TR96-003,roetteler1998polynomial,Friedl2014hidden}. Given that learning $H$ given copies of $\rho_H$ for arbitrary groups has been open for decades, this motivates the following questions.

\begin{question}
For arbitrary groups, can we time-efficiently learn coset states in the alternate models of learning that we discussed in Section~\ref{sec:alternatemodels}? What other groups can we time-efficiently learn $H$ given copies of $\rho_H$?
\end{question} 

Additionally, we remark that all known $\mathsf{HSP}$ algorithms following the standard approach where they measure the second register, which leads to the following question.
\begin{question}
Does there exist a proposal for non-Abelian $\mathsf{HSP}$ that doesn't measure the second register in $\ket{\psi_f}=\frac{1}{\sqrt{|G|}}\sum_{x \in G}\ket{x,f(x)}$ and takes advantage of the function register to learn the unknown subgroup $H$? Similarly, can we extend the lower bounds in~\cite{DBLP:journals/jacm/HallgrenMRRS10} to the setting where the learning algorithm has access to copies of $\ket{\psi_f}$?
\end{question}

\emph{\textbf{4.} Sample complexity of generalizing LMR.}  Lloyd, Mohseni, and Rebentrost~\cite{Lloyd2014} understood the following question (in the context of Hamiltonian simulation): 
How many copies of an unknown quantum state $\rho$ are required to simulate a unitary $U=e^{-i\rho t}$ which encodes $\rho$ for some $t\in \R$? The LMR protocol~\cite{Lloyd2014} showed that the sample complexity of implementing $U$ up to diamond norm $\delta$ is $O(t/\delta^2)$, and has found several applications in quantum computing. Subsequently the sample complexity obtained by the LMR protocol was shown to be optimal~\cite{kimmel2017hamiltonian}. A natural followup question is the following.
\begin{question}
     What is the sample complexity of approximately implementing $e^{-i f(\rho) t}$ for other functions $f$ acting on density matrices?
\end{question}

\suppress{ 

Now, consider the goal of certifying that a local Hamiltonian has large ground energy, for example, greater than a number $E$. Intuitively, it should be even harder to certify this with quantum states, as one needs to make sure that no quantum state achieves low energy. In fact, we do not expect one copy of an $n$-qubit state to convince us that the Hamiltonian has energy at least $E$.  

Now, let's add a twist to this problem. Suppose we are given $k$ copies of an unknown quantum state $\ket{\psi}$ and we want an algorithm that accepts with probability very close to $1$, if $\ket{\psi}$ has energy $\geq E$. With $k=1$, this task can't be achieved, as various states have different energies. With very large $k$, we can simply perform tomography and then perform classical computation to compute the energy. Our question is as follows.
\begin{question}
What is the tight sample complexity for the problem of determining if the energy is at least $E$?
\end{question}
Such an algorithm could help certify that the ground energy is at least $E$ - if the success probability is always close to $1$, then the corresponding measurement operator has eigenvalues close to $1$, Note however, that it may not be a very efficient certificate.

However, it seems difficult to formally prove this. An interesting approach is instead to rule out various potential ways of certifying that the energy is at least $E$. Towards this, let's consider a quantum algorithm $\mathcal{A}$ that takes $k$ copies of a quantum state $\ket{\psi}$ and runs some energy measurement procedure on it (such as quantum phase estimation). If the energy is at least $E$, it accepts. Such a protocol can be viewed as performing a measurement $\{M, \id-M\}$ on $\ket{\psi}^{\otimes k}$ and the probability to accept is $\bra{\psi}^{\otimes k}M\ket{\psi}^{\otimes k}$. We allow the algorithm to accept with probability at least $c \geq 2^{-poly(n)}$. To determine that the probability of acceptance was close to 1 (guaranteeing that energy is indeed $\geq E$ for all states), we would have to compute the eigenvalues of the matrix $M'=\Pi^k_{\text{sym}}M\Pi^k_{\text{sym}}$. Here,  $\Pi^k_{\text{sym}}$ is the projector onto the symmetric subspace. This itself can be as hard a task as diagonalizing the Hamiltonian - except when $M'$ is exceedingly close to $\Pi_{\text{sym}}$ and any diagonal entry of $M'$ can be efficiently computed. Indeed, we can estimate $\frac{1}{d_k}\|M'-\Pi_{\text{sym}}\|^2_2 = \frac{1}{d_k}\sum_{i}(\bra{i}(M')^2\ket{i}-2\bra{i}M'\ket{i}+1)$, with $\{\ket{i}\}$ being a basis for the symmetric subspace and $d_k$ being its dimension. This leads to the next question.}

\textbf{Acknowledgements.} We thank Matthias Caro and  the anonymous reviews of Nature Reviews Physics for several comments improving the presentation of this work and Abhinav Deshpande for useful comments. We thank Iulia Georgescu for commissioning this survey for the Nature Reviews~Physics. AA acknowledges support through the NSF CAREER Award No. 2238836 and NSF award QCIS-FF: Quantum Computing \& Information Science Faculty Fellow at Harvard University (NSF 2013303).

\newcommand{\etalchar}[1]{$^{#1}$}

\end{document}